\theoremstyle{plain}
\newtheorem{theorem}{Theorem}[section]
\newtheorem{corollary}[theorem]{Corollary}
\newtheorem{lemma}[theorem]{Lemma}
\newtheorem{proposition}[theorem]{Proposition}
\newtheorem{definition}[theorem]{Definition}
\newtheorem{assumption}[theorem]{Assumption}
\theoremstyle{remark}
\newtheorem{remark}[theorem]{Remark}
\numberwithin{equation}{section}
\newcommand{\ind}{1\!\kern-1pt \mathrm{I}}
\newcommand{\rsto}{]\!\kern-1.8pt ]}
\newcommand{\lsto}{[\!\kern-1.7pt [}
\numberwithin{equation}{section}
\newcommand{\RR}{\mathbb{R}}
\newcommand{\NN}{\mathbb{N}}
\newcommand{\cA}{\mathcal{A}}
\newcommand{\cC}{\mathcal{C}}
\newcommand{\cD}{\mathcal{D}}
\newcommand{\cF}{\mathcal{F}}
\newcommand{\cM}{\mathcal{M}}
\newcommand{\cZ}{\mathcal{Z}}
\newcommand{\massP}{\mathbf{P}}
\newcommand{\massQ}{\mathbf{Q}}
\newcommand{\massE}{\mathbf{E}}
\newcommand{\lmd}{\lambda}
\newcommand{\conv}{\textnormal{conv}}
\newcommand{\Var}{\operatorname{Var}}
\newcommand{\sint}{\stackrel{\mbox{\tiny$\bullet$}}{}}
\begin{document}
\title[Utility maximization, random endowment, transaction costs]{Utility maximization problem with random endowment and transaction costs: when wealth may become negative}
 \author{Yiqing Lin}
\address{Yiqing Lin\newline\indent Centre de math\'ematiques appliqu\'ees,\newline\indent \'Ecole Polytechnique, Route de Saclay, F-91128 Palaiseau Cedex - France}
  \email{yiqing.lin@polytechnique.edu}
   \author{Junjian Yang}
 \address{Junjian Yang\newline\indent Fakult\"at f\"ur Mathematik, Universit\"at Wien\newline \indent Oskar-Morgenstern Platz 1, A-1090 Wien - Austria}
  \email{junjian.yang@univie.ac.at}

\begin{abstract}
  In this paper we study the problem of maximizing expected utility from the terminal wealth with proportional transaction costs and random endowment.
  In the context of the existence of consistent price systems, we consider the duality between the primal utility maximization problem and the dual one, 
    which is set up on the domain of finitely additive measures. In particular, we prove duality results for utility functions supporting possibly negative values. 
  Moreover, we construct the shadow market by the dual optimal process and consider the utility based pricing for random endowment. 
\end{abstract}

\keywords{Utility maximization, random endowment, duality approach, shadow price, utility based pricing}
\date{\today}
\subjclass[2010]{91B16, 91G10}
\thanks{The authors gratefully acknowledge financial support from the Austrian Science Fund (FWF) under grant P25815 and from the European Research Council (ERC) under grant 321111. This work is partially done during the visit of J. Yang at CMAP, \'Ecole Polytechnique, which is very much appreciated.}
\maketitle 

% \newpage

\section{Introduction}
 A classical problem in mathematical finance is how an economic agent maximizes her expected utility from terminal wealth by trading in a financial market.  
 In particular, to consider this problem with general market models beyond Markovian asset prices, a modern approach called ``duality method'' or ``martingale method'' has been developed since 1990s. 
 This approach is based on duality characterizations of portfolios provided by the set of ``martingale measures''. 
 The main idea is to solve a dual variational problem and then to find the solution of the original utility maximization problem by convex duality. 
 In \cite{KS99}, Kramkov and Schachermayer discussed this problem in general semimartingale framework by establishing an abstract duality theory for the primal and dual problems on bipolar subsets of $L^0$. 
 In particular, the authors considered an agent, endowed with deterministic initial wealth, whose preference is modeled by a utility function supporting only positive wealth. 
 The case that utility functions supporting possibly negative wealth was considered by Schachermayer in \cite{Sch01} when the stock price process is a locally bounded semimartingale. 
 Afterwards, Biagini and Frittelli \cite{BF05} generalized the result of \cite{Sch01} by removing the local boundedness assumption. \\
 
 %\vspace{2mm}
 
 The model discussed in \cite{KS99} was subsequently improved in Cvitani\'c et al.~\cite{CSW01} by allowing for bounded random endowment rather than the deterministic one. 
 In that paper, the utility function is identical with the one in \cite{KS99}. However, the dual problem is defined on the enlarged domain of finitely additive measures. 
 Then, the utility maximization problem in \cite{Sch01} with the new setting for the economic agent was treated by Owen \cite{Owe02} when the market is driven by locally bounded semimartingales. 
 More recently, the results of \cite{CSW01} and \cite{Owe02} have been extended in several directions, e.g., relaxing the boundedness assumption on the random endowment or on the stock price process, see \cite{HK04, OZ09, BFG11}.\\
 
 %\vspace{2mm}

 In this paper, we shall consider the utility maximization problem with transaction costs, which is essentially as old as its frictionless counterpart. 
 Cvitani\'c and Karatzas are the first who brought the duality method to this domain. 
 In \cite{CK96}, the problem is set up with utility functions on the positive half line and the market is modeled with a bond and a stock driven by the It\^o process. 
 Moreover, the agent faces constant proportional transaction costs when trading and she is required to liquidate her portfolio to the bond at the end. 
 The authors of \cite{CK96} proved the existence of solutions to the problem of utility maximization by a priori assuming the existence of dual optimizer in a proper domain, 
   whereas a complete result without such assumption was eventually given by Cvitani\'c and Wang \cite{CW01}. 
 Afterwards, Bouchard extended this result by considering utility functions defined on the whole real line and by adding bounded random endowment in \cite{Bou02}. 
 In the frictionless case, we usually assume that there is a single consumption asset, which is used as a num\'eraire. Mathematically whether the agent liquidates her stock holding makes no difference in the frictionless market but does matter in the problem with transaction costs. 
 Therefore, it is quite natural to allow the agent to have access to several consumption assets and this induces a new formulation of the financial model. 
 This new model describes a multi-currency market and was first introduced by Kabanov \cite{Kab99} based on the concept of solvency cone. 
 Based on this model, utility maximization problem with a multivariate utility function was later studied by Deelstra et al.~\cite{DPT01}, Campi and Owen \cite{CO11} without random endowment,
   and by Benedetti and Campi \cite{BC12} with bounded random endowment. 
 They provided static duality results in different ways. \\
 
 %\vspace{2mm}
 
 In contrast with the multi-currency model, we work in the present paper with a continuous-time num\'eraire-based market model driven by a general stock price process. 
 Notice that this process is not necessarily a semimartingale, which is required in the frictionless case for ensuring the absence of arbitrage. 
 In this framework, utility maximization problems have been examined by Czichowsky and Schachermayer in \cite{CS15, CS16b} under the existence of consistent price systems. 
 Precisely speaking, the consistent price system consists of two processes: a fictitious stock prices process, which lives between the bid and ask ones; a (local) martingale density of this fictitious price process. 
 This creative concept introduced by e.g., \cite{JK95, Sch04, GRS08} serves as the equivalent martingale measure in the frictionless case. 
 Under such assumption, Guasoni et al.~\cite{GRS10} and Schachermayer \cite{Sch14b}  proved a superreplication theorem similar to the classical one, 
   so that the primal and dual domains considered in \cite{CS15, CS16b} have a similar functional structure as the ones in \cite{KS99, Sch01}. 
 Therefore, the abstract theorem derived from \cite{KS99, Sch01} could be employed in this new context to solve the utility maximization of the above mentioned both kinds of utility functions. 
 One aim of the present paper is to generalize the result of \cite{CS16b} by bringing in bounded random endowment. 
 To achieve this, we first have an intermediate duality result for the problem on the positive half line, which could be proved by proceeding the argument in \cite{CSW01}. 
 This intermediate duality result is similar to \cite{BC12}, however it is more straightforward and adapted to the num\'eraire-based setting, which is necessary for the subsequent approximation. 
 For the problem on the whole real line, we first construct auxiliary  primal and dual functions by proper truncation in order to come back to the case for the intermediate result. 
 Then, we exhibit similar procedures as in \cite{Owe02} to complete the proof by approximating both optimizers and expected value functions. \\
 
 %\vspace{2mm}

 As introduced above, each consistent price system models a fictitious market. 
 An important question in the theory of portfolio optimization with proportional transaction costs is whether or not there exists a so-called shadow price process, i.e., a least favorable frictionless market extension, 
   that leads to the same optimal strategy and utility. 
 If the answer is affirmative, the behavior of a given economic agent can be explained by passing to a suitable frictionless shadow market. 
 Starting from \cite{KM10}, shadow prices have proved to be useful for solving concrete utility maximization problems, see e.g., \cite{GGMS14, GMS13, HP15}.  
 Moreover, Kallsen and Muhle-Karbe \cite{KM11} have pointed out that shadow prices always exist for utility maximization problems in finite probability spaces. For general utility functions on the positive half line, 
   Cvitani\'c and Karatzas \cite{CK96} worked within the It\^o framework and construct a shadow market whenever the dual solution is a martingale measure. 
 In the light of \cite{CK96}, Czichowsky et al.~\cite{CSY15} showed that, 
   if the stock price process is continuous and satisfies the condition $(NUPBR)$ of ``No Unbounded Profit with Bounded Risk'', the optimizer of the corresponding dual problem is always a local martingale, 
   hence the shadow price exists. 
 In addition, the affirmative result is obtained once certain constraint is confined to the agent. 
 For example, Loewenstein \cite{Loe00} confirmed that shadow prices exist for continuous bid-ask price processes whenever short positions are ruled out. 
 This result is recently generalized by Benedetti et al.~\cite{BCKMK13} with Kabanov's general multi-currency market models and by Gu et al.~\cite{GLY16b} with positive random endowment. 
 However, several counterexamples have been found to show that shadow prices in the classical sense may fail to exist without further assumptions, see \cite{Rok13, CMKS14, BCKMK13, CS15, CSY15}.  
 In the general c\`adl\`ag framework, Czichowsky and Schachermayer introduced a new notion in \cite{CS15} which makes it possible to interpret the dual optimizer as a shadow price but in a generalized sense, regardless whether or not the dual optimizer is a local martingale.
 This generalized shadow price process is defined via a ``sandwiched'' couple consisting of a predictable and an optional strong supermartingale, and pertains to all strategies which remain solvent under transaction costs. 
 This notion is afterward extended by Bayraktar and Yu to the case with random endowment in \cite{BY15}. Inspirited by Hugonnier and Kramkov \cite{HK04}, the authors of \cite{BY15} considered a utility maximization problem with transaction costs and additionally with unbounded random endowment confined by maximal trading strategies and a uniform integrability condition. They constructed generalized shadow prices, whenever the duality result holds and a sufficient condition on the dual optimizer is assumed. \\
 
 %\vspace{2mm}
   
 In contrast to \cite{CS15, BY15}, if we consider the utility maximization problem with utility functions defined on the whole real line, the picture changes. 
 In \cite{CS16b}, Czichowsky and Schachermayer confirmed that the existence of strictly consistent price systems (see \cite{Sch04,CS06} for a definition) with ``finite entropy'' guarantees the existence of classical shadow prices. 
 We shall find in the present paper that the presence of bounded random endowment will not alter this result, which is based on the fact that the dual optimizer is associated with a strictly positive martingale density. This property does not necessarily hold true for the case on positive half line in \cite{CS15, BY15}. However, the case with unbounded random endowment similar to \cite{BY15} is left for future research. 
 Moreover, we provide a more generalized definition of shadow price, i.e., 
   we only require that the shadow price market yields the same optimal utility, then such kind of shadow price could be always constructed from the dual optimizer. \\
   
 %\vspace{2mm}
 %
%\textcolor{red}{(May be several sentences for the utility based pricing)}
%
 %\vspace{2mm}
 
 The remainder of the article is organized as follows. In the next section we introduce basic settings and formulate the problem. 
 In Section 3 and Section 4 we successively study the utility maximization problems on the positive real line and on the whole real line and provide respectively the duality results.  
 We discuss in Section 5 the existence of shadow prices and the utility based pricing with exponential utility functions as an application.

\section{Financial market model}
In this section, we briefly review preliminaries of a general continuous-time num\'eraire-based market model with proportional transaction costs. 
Readers interested in more details about this framework can consult \cite{Sch14a, Sch14b, CS15, CS16b, CSY15}.\\

We consider a financial market consisting of one riskless and one risky assets, where the price of the riskless asset is constant and normalized to $1$.
Moreover, trade is permissible over a finite time interval $[0, T]$. Denote by $S=(S_t)_{0\leq t\leq T}$ the stock price process, which is based on a filtered probability space $(\Omega, \cF, (\cF_t)_{0\leq t\leq T}, \massP)$ 
 satisfying the usual hypotheses of right continuity and saturatedness, where $\cF_0$ is assumed to be trivial.  
 
 \begin{assumption} \label{Sassumption}
  The process $S=(S)_{0\leq t\leq T}$ is adapted to $(\cF_t)_{0\leq t\leq T}$, with strictly positive and c\`adl\`ag paths. 
 \end{assumption}
 
 We introduce proportional transaction costs $\lambda> 0$ for trading stock.
 The couple of processes $((1-\lambda)S_t,S_t)_{0\leq t\leq T}$ model the bid and ask prices of stock share,  
 which means that the agent has to pay a higher ask price $S_t$ when buying but only receives a lower bid price $(1-\lambda)S_t$ when selling them. 
 For obvious economic reasons, we assume $\lambda<1$.\\
 
 Consider an agent in this market, endowed with initial wealth $x\in\RR$, who receives moreover an exogenous endowment $e_T$ at time $T$, 
  which is $\cF_T$-measurable and satisfies $\rho:=\|e_T\|_{\infty}<\infty$. \\

 The agent's \textbf{trading strategies} are modeled by $\RR^2$-valued, predictable processes $\varphi=(\varphi^0_t,\varphi^1_t)_{0\leq t\leq T}$ of finite variation, 
 where $\varphi^0_t$ and $\varphi^1_t$ denote the holdings in units of the riskless and the risky asset, respectively, after rebalancing the portfolio at time $t$.
 We note that each process $\varphi$ of finite variation is l\`adl\`ag and can be decomposed into two nondecreasing processes $\varphi^\uparrow$ and $\varphi^\downarrow$ both null at zero, i.e., 
  $\varphi_t=\varphi_0+\varphi^\uparrow_t-\varphi^\downarrow_t$. 
 The total variation $\Var_t(\varphi)$ of $\varphi$ is then given by $\Var_t(\varphi)=\varphi^\uparrow_t+\varphi^\downarrow_t$.
 We denote by $\varphi^c$ its continuous part 
   $$ \varphi^c_t:=\varphi_t-\sum_{s<t}\Delta_+\varphi_s -\sum_{s\leq t}\Delta\varphi_s, $$
   where $\Delta_+\varphi_s:=\varphi_{s+}-\varphi_s$ and $\Delta\varphi_s:=\varphi_s-\varphi_{s-}$ are its right and left jumps, respectively. 

\begin{definition} [self-financing]
  A trading strategy $\varphi = (\varphi_t^0,\varphi_t^1)_{0\leq t\leq T}$ is called \textbf{self-financing under transaction costs $\lambda$}, if 
   \begin{equation} \label{SF}
     \int_s^td\varphi^0_u \leq -\int_s^tS_ud\varphi_u^{1,\uparrow} + \int_s^t(1-\lambda)S_ud\varphi_u^{1,\downarrow},\quad 0\leq s\leq t\leq T,
   \end{equation}
  %for all $0\leq s\leq t\leq T$, 
  where the integrals are defined via
   \begin{align*}
     \int_s^tS_ud\varphi_u^{1,\uparrow} &:= \int_s^tS_ud\varphi_u^{1,\uparrow,c} + \sum_{s<u\leq t}S_{u-}\Delta\varphi_u^{1,\uparrow} + \sum_{s\leq u<t}S_u\Delta_+\varphi_u^{1,\uparrow}, \\
     \int_s^tS_ud\varphi_u^{1,\downarrow} &:= \int_s^tS_ud\varphi_u^{1,\downarrow,c} + \sum_{s<u\leq t}S_{u-}\Delta\varphi_u^{1,\downarrow} + \sum_{s\leq u<t}S_u\Delta_+\varphi_u^{1,\downarrow}.
   \end{align*}  
\end{definition}

 The self-financing condition \eqref{SF} states that purchases and sales of the risky asset are accounted for in the riskless position: for $0\leq s< t\leq T$, 
 \begin{align*} % \label{SFdetail}
   \int_s^td\varphi^{0,c}_u & \leq -\int_s^tS_ud\varphi^{1,\uparrow,c}_u+\int_s^t(1-\lambda)S_ud\varphi^{1,\downarrow,c}_u, \notag \\
   \Delta\varphi^0_t&\leq-S_{t-}\Delta\varphi^{1,\uparrow}_t +(1-\lambda)S_{t-}\Delta\varphi^{1,\downarrow}_t,\notag   \\
   \Delta_+\varphi^0_t&\leq-S_t\Delta_+\varphi^{1,\uparrow}_t +(1-\lambda)S_t\Delta_+\varphi^{1,\downarrow}_t.
 \end{align*}

 \begin{definition}[liquidation value]
   We define the \textbf{liquidation value} at time t by 
  \begin{align*}
    V_t^{\mathrm{liq}}(\varphi) := \varphi^0_t + (\varphi^1_t)^+(1-\lambda) S_t - (\varphi^1_t)^-S_t. 
 %                               = &\,\, \varphi^0_0 + \varphi^1_0S_0 + \int_0^t\varphi^1_udS_u - \lambda\int_0^tS_{u-}(d\varphi^1_u)^- -\lambda S_t(\varphi^1_t)^+. 
  \end{align*}
 \end{definition}
 
 \begin{remark}
   The following formula can be deduced by integration by parts:
    $$ V_t^{\mathrm{liq}}(\varphi)= \varphi^0_0 + \varphi^1_0S_0 + \int_0^t\varphi^1_udS_u - \lambda\int_0^tS_{u}d\varphi^{1,\downarrow}_u -\lambda S_t(\varphi^1_t)^+. $$
%   This means that the liquidation value $V_t^{\mathrm{liq}}(\varphi)$ is given by the initial value of the position $\varphi^0_0 + \varphi^1_0S_0$ 
 %    plus the gains from trading $\int_0^t\varphi^1_udS_u$ minus the transaction costs for rebalancing the portfolio $\lambda\int_0^tS_{u}d\varphi^{1,\downarrow}_u$ 
  %   minus the costs $\lambda S_t(\varphi^1_t)^+$ for liquidating the position at time $t$. 
 \end{remark}
 
 \begin{definition}[admissibility]\label{admi}
   A self-financing trading strategy $\varphi$ is called \textbf{admissible}, if there exists $M>0$ such that  for every $[0,T]$-valued stopping time $\tau$,
    $$ V_{\tau}^{\mathrm{liq}}(\varphi) \geq -M, \quad a.s. $$ 
 \end{definition}
 
 For $x\in\RR$, we denote by $\cA^{\lambda}_{adm}(x)$ the set of all self-financing and admissible trading strategies under transaction costs $\lambda$ 
   starting from  $(\varphi^0_0,\varphi^1_0)=(x,0)$ and 
   $$ \cC^{\lambda}(x) := \left\{V^{\mathrm{liq}}_T(\varphi) \,\Big|\, \varphi = (\varphi^0,\varphi^1)\in\cA^{\lambda}_{adm}(x) \right\}. $$
% As explained in \cite[Remark 4.2]{CS06}, we can assume without loss of generality that $\varphi_T^1=0$ and therefore 
 %  $$ \cC^{\lambda}(x) = \left\{\varphi^0_T \,\Big|\, \varphi = (\varphi^0,\varphi^1)\in\cA^{\lambda}_{adm}(x) \right\}. $$
 \begin{definition}[$\lambda$-consistent price system]
     Fix $0<\lambda <1$ and the stock price process $S=(S_t)_{0\le t\le T}$.
     A $\lambda$-\textbf{consistent price system} is a two-dimensional strictly positive process $Z=(Z^0_t,Z^1_t)_{0\le t\le T}$ 
      with $Z^0_0=1$, that consists of a martingale $Z^0$ and a (local) martingale $Z^1$ under $\massP$ such that there exists an adapted process process $\widetilde{S}:=(\widetilde{S})_{t\in [0, T]}$ satisfying for $0\leq t\leq T$,
      \begin{equation}\label{J10}
       \widetilde{S}_t \in [(1-\lambda)S_t, S_t]\quad \mbox{and}\quad  Z^1_t=Z^0_t\widetilde{S},\quad a.s.
      \end{equation}
  %  , $Z^1_t=Z^0_t\widetilde{S}$.
     The collection of all $\lambda$-consistent price systems is denoted by $\mathcal{Z}^{\lambda}_e(S)$.
     In addition, we denote by $\mathcal{Z}^{\lambda}_a(S)$ the set of all nonnegative processes $Z$ satisfying all conditions above except for the strict positivity.\\
     
     For $0<\lambda<1$, we say that a price process $S=(S_t)_{0\leq t\leq T}$ satisfies $(CPS^\lambda)$, if there exists a $\lambda$-consistent price system. 
      \end{definition}
   
 \begin{remark}
   The presence of transaction costs enables us to consider optimization problems with models beyond semimartingales in an arbitrage-free way. 
   In this new context $\lambda$-consistent price systems play the role of equivalent martingale measures in the frictionless case. 
   For each $Z\in \mathcal{Z}^\lambda_e(S)$, the couple models a fictitious market with stock price process $\widetilde{S}$ that satisfies no free lunch with vanishing risk $(NFLVR)$.
 \end{remark}
   
 According to Schachermayer \cite{Sch14b}, the important superreplication theorem can be established under the following assumption.
 \begin{assumption}  \label{CPSassumption}
   $S$ satisfies $(CPS^{\mu})$ for all $\mu\in(0,1)$. 
 \end{assumption}
 
 \begin{theorem}[superreplication theorem, {\cite[Theorem 1.4.]{Sch14b}}] \label{superreplication}
  Let $S$ satisfy Assumption \ref{Sassumption} and Assumption \ref{CPSassumption}. Fix $0<\lambda<1$. 
  Let $ g\in L^0(\Omega,\cF,\massP)$ be a random variable bounded from below, i.e., $g\geq -M$, a.s., for some $M>0$. 
  Then, $g\in\cC^{\lambda}(x)$, if and only if $\massE[Z^0_Tg] \leq x$, for each $Z\in \mathcal{Z}^\lambda_e(S)$.
\end{theorem}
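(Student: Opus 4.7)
\emph{The easy direction.} For the necessity, given $g = V^{\mathrm{liq}}_T(\varphi)$ with $\varphi\in\cA^{\lambda}_{adm}(x)$ and any $Z=(Z^0,Z^1)\in\mathcal{Z}^{\lambda}_e(S)$, the plan is to track the value process of $\varphi$ under the deflator $Z$:
$$ X_t := \varphi^0_t Z^0_t + \varphi^1_t Z^1_t = Z^0_t\bigl(\varphi^0_t + \varphi^1_t \widetilde{S}_t\bigr).$$
Integration by parts (accounting for the left and right jumps of the l\`adl\`ag process $\varphi$) together with the self-financing inequality \eqref{SF} and the pointwise bounds $\widetilde{S}\in[(1-\lambda)S,S]$ shows that every infinitesimal buy/sell is at least as costly under the fictitious price $\widetilde{S}$ as under transaction-cost trading, so $X$ is an optional strong supermartingale started at $x$, bounded below thanks to admissibility and positivity of $Z^0$. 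At time $T$, the estimate $\widetilde{S}_T\in[(1-\lambda)S_T, S_T]$ gives
$$ X_T \geq Z^0_T\bigl(\varphi^0_T + (\varphi^1_T)^+(1-\lambda)S_T - (\varphi^1_T)^-S_T\bigr) = Z^0_T V^{\mathrm{liq}}_T(\varphi),$$
and taking expectations yields $\massE[Z^0_T g]\leq x$.

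\emph{The hard direction.} Assume $g\geq -M$ and $\massE[Z^0_T g]\leq x$ for every $Z\in\mathcal{Z}^{\lambda}_e(S)$. I would first reduce to bounded $g$ by truncating to $g\wedge n$, treating the bounded case, and then passing to the limit via monotone convergence (admissibility is preserved since all strategies are bounded below by $-M$). For bounded $g$, set
$$ \cD := (\cC^{\lambda}(x) - L^0_+)\cap L^{\infty}, $$
and argue by contradiction: if $g\notin\cD$, then \emph{provided} $\cD$ is weak-* closed, the Hahn--Banach theorem in $(L^{\infty},\sigma(L^{\infty},L^1))$ yields $h\in L^1_+$ with $\massE[h]>0$ strictly separating $g$ from $\cD$. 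Rescaling $h$ to have unit mass, I would define $Z^0_t := \massE[h\mid\cF_t]$ (after normalization) and then construct a process $\widetilde{S}\in[(1-\lambda)S, S]$ for which $Z^1 := Z^0\widetilde{S}$ is a local martingale, by testing the separation inequality against elementary buy-then-sell perturbations at arbitrary stopping times; the resulting $Z\in\mathcal{Z}^{\lambda}_e(S)$ then violates $\massE[Z^0_T g]\leq x$, a contradiction.

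\emph{Main obstacle.} The core difficulty is the weak-* closedness of $\cD$. By the Krein--\v{S}mulian theorem this reduces to showing closedness in probability of $\cD\cap\{\|f\|_{\infty}\leq N\}$ for every $N$, which in turn amounts to the following compactness statement: given $(\varphi^n)\subset\cA^{\lambda}_{adm}(x)$ with $V^{\mathrm{liq}}_T(\varphi^n)\to g$ and uniformly bounded below, extract a limit $\varphi\in\cA^{\lambda}_{adm}(x)$ with $V^{\mathrm{liq}}_T(\varphi)\geq g$. The delicate point is that the limit must remain predictable, of finite variation, and self-financing under the \emph{same} transaction costs $\lambda$. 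This is where the full hypothesis $(CPS^{\mu})$ for \emph{every} $\mu\in(0,1)$ is essential rather than just for $\mu=\lambda$: testing admissible strategies against a consistent price system for some $\mu<\lambda$ produces a uniform bound on the total variations $\Var_T(\varphi^n)$, which unlocks a Koml\'os/convex-combination argument in the spirit of Campi--Schachermayer to build a convergent subsequence whose limit inherits the finite-variation, self-financing, and admissibility properties.
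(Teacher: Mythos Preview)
The paper does not prove Theorem~\ref{superreplication}; it is quoted verbatim from \cite[Theorem~1.4]{Sch14b} and used as a black box throughout (for instance in the proof of Theorem~\ref{mainthm1} and in the identification $\overline{\cM^\lambda_a}^{\sigma(ba,L^\infty)}=\cD^\lambda$ inside the proof of Theorem~\ref{maintheoremR}). So there is no in-paper argument to compare your proposal against.

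That said, your sketch is a faithful outline of the proof in the cited reference. The easy direction is exactly as you describe, with the minor refinement that $X$ is first only a \emph{local} supermartingale; the lower bound $X_t\geq -MZ^0_t$ together with the martingale property of $Z^0$ upgrades it to a true supermartingale. For the hard direction, the Hahn--Banach/Krein--\v{S}mulian route is the correct one, and you have put your finger on the genuine content: the $L^0$-closedness of bounded subsets of $\cC^\lambda(x)-L^0_+$, obtained by a convex-combination/Koml\'os argument after a uniform total-variation bound. That bound is obtained by pairing the strategies against a $\lambda'$-consistent price system for some $\lambda'<\lambda$, which is precisely why Assumption~\ref{CPSassumption} asks for $(CPS^\mu)$ for \emph{every} $\mu\in(0,1)$ rather than just $\mu=\lambda$.

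One place where your sketch is thinner than the actual proof: after the separation hands you $h\in L^1_+$, you still have to exhibit a companion process $Z^1=Z^0\widetilde S$ with $\widetilde S\in[(1-\lambda)S,S]$ that is a local martingale. In \cite{Sch14b} this is not achieved merely by ``elementary buy-then-sell perturbations''; it requires identifying the polar of $\cC^\lambda\cap L^\infty$ in $L^1$ with the cone generated by $\cM^\lambda_a$, an argument that again relies on the availability of $\mu$-consistent price systems for $\mu<\lambda$. Your proposal gestures at this step but does not really carry it out.
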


\section{Utility maximization problem on the positive real line}

 In this section, we suppose the agent's preferences over terminal wealth are modeled by a utility function $U:(0,\infty)\to\RR$, 
   which is strictly increasing, strictly concave, continuously differentiable and satisfies the Inada condition:
     $$ U'(0) := \lim_{x\to 0}U'(x) = \infty \quad\textnormal{ and } \quad U'(\infty) := \lim_{x\to \infty}U'(x) = 0. $$
 Without loss of generality, we may assume $U(\infty)>0$ to simplify the analysis. 
 Define also $U(x)=-\infty$ whenever $x\leq 0$.

 \begin{assumption} \label{U(x)assumption}
  The utility function $U$ satisfies the reasonable asymptotic elasticity, i.e.,
   $$ AE(U):= \limsup_{x\to\infty}\frac{xU'(x)}{{U}(x)}< 1. $$
 \end{assumption}

 We refer the reader to \cite{KS99} for financial interpretation and more results about the previous assumption.\\

 For the utility maximization problem, we restrict our attention to the terminal liquidation wealth, for $x>0$, the primal problem is
   \begin{align} \label{PP}
     \massE[U(x+V^{\mathrm{liq}}_T(\varphi)+ e_T)] \to \max !, \qquad \varphi:= (\varphi^0,\varphi^1) \in \cA^{\lambda}_{adm}(0).
   \end{align}

 We denote $\cC^{\lambda}:=\cC^{\lambda}(0)$. Without loss of generality, we can rewrite $\cC^\lambda$ by
  \begin{equation} \label{equacl} 
    \cC^{\lambda} = \left\{\varphi^0_T \,\Big|\, \varphi = (\varphi^0,\varphi^1)\in\cA^{\lambda}_{adm}(0),\ \varphi^1_T=0 \right\}. 
  \end{equation}                       
 Then the value function of the problem \eqref{PP} is defined as follows
   \begin{equation} \label{PP1}
     u(x):= \sup_{g\in\widetilde{\cC}^{\lambda}}\massE[U(x+g+e_T)],
   \end{equation}
   where the set $\widetilde{\cC}^{\lambda}$ consists of those elements of $\cC^{\lambda}$ for which the above expectation is well defined.  
 Finally, in order to exclude trivial case, we have the following assumption, which implies that $u(x)<\infty$ due to the concavity of $u$.
  \begin{assumption} \label{u(x)assumption}
    The value function $u(x)$ is finitely valued for some $x>\rho$. 
  \end{assumption}

 Let us denote $V: \RR_+\to\RR$ the convex conjugate function of $U(x)$ defined by 
  $$ V(y):=\sup_{x>0}\{U(x)-xy\}, \quad y>0. $$
 It is obvious that $V(y)$ is strictly decreasing, strictly convex and continuously differentiable and satisfies $$ V(0)=U(\infty), \quad V(\infty)=U(0). $$
 We also define $I:(0,\infty)\to (0,\infty)$ the inverse function of $U'$ on $(0,\infty)$, which is strictly decreasing, and satisfies $I(0)=\infty$, $I(\infty)=0$ and $I = -V'$.\\
 
 To consider the dual problem to \eqref{PP1}, the usual dual space is
  $$ \cM^{\lambda}_a:= \left\{Z_T^0\in L^1 \,\Big| \, (Z^0,Z^1)\in\cZ^{\lambda}_a(S) \right\}, $$
   which is a subset of $L^1$. 
 According to \cite{CSW01}, this subset is relatively small to hold the dual optimizer of the problem subsequently defined. 
 Thus, we extend it by completion on the enlarge space $ba=(L^{\infty})^{*}$, the dual space of $L^{\infty}$, and define the following subset of $ba$, which is equipped with the weak-star topology $\sigma(ba,L^{\infty})$,
    $$ \cD^{\lambda}:=\big\{Q\in ba_+\, \big| \, \|Q\|=1 \,\textnormal{ and }\, \langle Q, g \rangle\leq 0,\ \textnormal{ for all }\, g\in\cC^{\lambda}\cap L^{\infty} \big\},  $$
   and $\cD^{\lambda,r}:=\cD^{\lambda}\cap L^1$, where $r$ stands for regular. 
 We note that $\cD^{\lambda}$ is clearly convex and also $\sigma(ba,L^{\infty})$-compact by Alaoglu's theorem. 
 Since $-L^\infty_+\subseteq \cC^\lambda$, we see that $\cD^\lambda\subseteq ba_+$.\\
 
 For each $Q\in ba_+$, it admits a unique Yoshida-Hewitt decomposition in the form $Q = Q^r+Q^s$, where the regular part $Q^r$ is the countably additive part and  
    $Q^s$ is the purely finitely additive part. 
 Moreover, we define for $X\in L^0$, bounded from below, and $Q\in ba_+$, 
    $$ \langle Q,X\rangle:=\lim_{n\to\infty}\langle Q,X\wedge n\rangle\in [0,\infty]. $$
 Then, we observe that for each $ g\in\cC^{\lambda}$, $\langle Q,g\rangle\leq 0$, for all $g\in\cC^{\lambda}$ and $Q\in\cD^{\lambda}$.\\
  
 %\vspace{2mm} 
  
 Now we define the dual optimization problem by 
  \begin{equation} \label{defvy}
    v(y):= \inf_{Q\in\cD^{\lambda}}\left\{\massE\left[V\left(y\frac{dQ^r}{d\massP}\right)\right]+y\langle Q,e_T\rangle \right\}.
  \end{equation}

%\vspace{3mm} %%%%%%%%%%%%%%%%%%%%%%%%%%%%%%%%%%%%%%%%%%%%%%%%%%%%%%%%%%%%%%%%%%%%%%%%%%%%%%%%%%%%%%%%%%%%%%%%%%%%%%%%%%%%%%%%%%%%%%%%%%%%%%%%%%%%%%%%%%%%%%%%%%%%%%%%

The following theorem is the counterpart of \cite[Theorem 3.1]{CSW01} in the frictionless case.
As \cite{CS15}, the presence of transaction costs does not alter the functional structure of the primal and dual domains.

\begin{theorem}   \label{mainthm1}
  Under Assumptions \ref{Sassumption}, \ref{CPSassumption}, \ref{U(x)assumption}, \ref{u(x)assumption},
  we have
  \begin{enumerate}
   \item $u(x)<\infty$ for all $x\in\RR$ and $v(y)<\infty$ for all $y>0$.
   \item The primal value function $u$ is continuously differentiable on $(x_0,\infty)$ and $u(x)=-\infty$ for all $x<x_0$, 
         {where $x_0:=-v'(\infty)=\sup_{Q\in\cD^{\lambda}}\langle Q,-e_T\rangle$}.
         The dual value function $v$ is continuously differentiable on $(0,\infty)$. 
   \item The functions $u$ and $v$ are conjugate in the sense that 
        \begin{equation*}   \label{v=u-xy}
            v(y) = \sup_{x>x_0}\{u(x)-xy\}, \quad y>0,
        \end{equation*}
        \begin{equation*}   \label{u=v+xy}
            u(x) = \inf_{y>0}\{v(y)+xy\}, \quad x>x_0.
        \end{equation*}
   \item For all $y>0$, there exists a solution $\widehat{Q}_y\in\cD^{\lambda}$ to the dual problem, which is unique up to the singular part. 
         For all $x>x_0$, $\widehat{g}:= I\Big(\widehat{y}\frac{d\widehat{Q}^r_{\hat{y}}}{d\massP}\Big)-x-e_T$ is the solution to the primal problem, 
         where $\widehat{y}=u'(x)$, which attains the infimum of $\{v(y)+xy\}$. 
  \end{enumerate}
\end{theorem}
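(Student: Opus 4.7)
The plan is to transport the argument of Cvitani\'c, Schachermayer and Wang \cite{CSW01} from the frictionless setting to the present num\'eraire-based market with proportional transaction costs. The crucial input is the superreplication theorem (Theorem \ref{superreplication}), which produces a clean polar pairing between $\cC^\lambda \cap L^\infty$ and $\cD^\lambda$; this allows one to set up an abstract convex-duality scheme for the shifted problem $g \mapsto \massE[U(x+g+e_T)]$ over $g \in \widetilde{\cC}^\lambda$, in the spirit of \cite{KS99}. A key structural observation is that, since $e_T$ is bounded, the map $Q \mapsto \langle Q, e_T\rangle$ is $\sigma(ba, L^\infty)$-continuous, which is precisely what keeps the finitely-additive dual tractable.

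For parts (1)--(3), I would first prove weak duality. For $g \in \widetilde{\cC}^\lambda$ and $Q \in \cD^\lambda$, the Young inequality applied pointwise, together with $\langle Q, g\rangle \leq 0$ and the boundedness of $e_T$, gives
\[
  \massE[U(x+g+e_T)] \leq \massE\!\left[V\!\left(y\frac{dQ^r}{d\massP}\right)\right] + y\langle Q, x+e_T\rangle,
\]
hence $u(x) \leq v(y) + xy$. Finiteness of $v$ at some (and therefore every) $y > 0$ follows by picking any regular $Q \in \cD^\lambda$ and using Assumption \ref{u(x)assumption}. The reverse inequality is then obtained via the abstract Kramkov--Schachermayer machinery adapted to the $ba$-setting, where reasonable asymptotic elasticity (Assumption \ref{U(x)assumption}) controls the effective domain and prevents the dual value from blowing up; smoothness of $u$ and $v$, and thus the conjugacy formulas of (3), follow by standard convex analysis. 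The identification $x_0 = \sup_{Q \in \cD^\lambda}\langle Q, -e_T\rangle$ is precisely the superreplication price of $-e_T$, again read off Theorem \ref{superreplication}.

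For part (4), the set $\cD^\lambda$ is convex and $\sigma(ba,L^\infty)$-compact by Alaoglu's theorem, so the existence of a minimizer $\widehat Q_y$ reduces to lower semicontinuity of
\[
  Q \mapsto \massE\!\left[V\!\left(y\,\tfrac{dQ^r}{d\massP}\right)\right] + y\langle Q, e_T\rangle
\]
on $\cD^\lambda$. The second term is even continuous; for the first, I would run a Koml\'os--Fatou argument on convex combinations of $dQ^r_n/d\massP$ to extract an a.s.\ limit, exploiting convexity of $V$ and a constant shift ensuring $V \geq 0$. Strict convexity of $V$ then yields uniqueness of the regular part $\widehat Q^r_y$. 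Given $\widehat Q_y$, the candidate primal optimizer $\widehat g := I(\widehat y\,d\widehat Q^r_{\widehat y}/d\massP) - x - e_T$ is read from the pointwise equality case of Young's inequality; verifying $\widehat g \in \widetilde{\cC}^\lambda$ then amounts, via Theorem \ref{superreplication}, to checking $\massE[Z^0_T \widehat g] \leq 0$ for every $Z \in \cZ^\lambda_e(S)$, which in turn follows from the first-order optimality of $\widehat Q_y$ and a no-duality-gap argument.

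The main obstacle is the careful handling of the singular part $Q^s$: one has to show that $Q^s$ affects the dual value only through the linear term $y\langle Q^s, e_T\rangle$ and never through the $V$-integral (hence the appearance of $dQ^r/d\massP$ rather than a formal Radon--Nikodym derivative inside $V$), and that the first-order condition actually identifies $\widehat g$ as the terminal liquidation value of an admissible self-financing strategy, rather than merely an element in the bipolar of $\cC^\lambda$. The num\'eraire-based formulation and the boundedness of $e_T$ make this identification noticeably cleaner than in the multi-currency treatment of \cite{BC12}, and this simplification is what makes the result a serviceable stepping stone for the subsequent approximation to the case of utility functions on the whole real line.
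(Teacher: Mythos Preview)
Your proposal is correct and follows essentially the same route as the paper: both transport the Cvitani\'c--Schachermayer--Wang argument to the transaction-cost setting, with the superreplication theorem (Theorem \ref{superreplication}) playing the role of the polar pairing, a Koml\'os-type extraction on convex combinations of the regular parts $dQ^r_n/d\massP$ (the paper invokes \cite[Lemma A.1]{CSW01}) combined with Fatou's lemma for existence of the dual minimizer, and the first-order condition plus Theorem \ref{superreplication} again to verify that $\widehat g\in\cC^\lambda$. The paper's proof is itself only a sketch pointing to \cite{CSW01}, so your level of detail is comparable; the one place where the paper is slightly more explicit is the chain $\sup_{Q\in\cD^{\lambda}}\{\langle Q^r,x+\widehat g\rangle-\langle Q^s,e_T\rangle\}=x$ (via \cite[Lemma 4.4]{CSW01}), which is the precise form of the first-order optimality you allude to when checking $\massE[Z^0_T\widehat g]\le 0$.
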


\begin{proof}
 The proof of this theorem shall be developed in the same way as \cite{CSW01}, thus we omit the details but emphasize where the new results for transaction costs apply. The readers who are interested in the detailed discussion are referred to \cite{GLY16c}.\\

 %\vspace{2mm}

 First of all, we observed that $v$ is finitely value by recalling the property of the dual problem in \cite{CS15} and the boundedness of the random endowment.
 Then, pick up a minimizing sequence $(Q_n)_{n\in \mathbb{N}}$ for the problem \eqref{defvy}. 
 Since $\mathcal D^\lambda$ is convex and weak star compact, we can apply \cite[Lemma A.1]{CSW01} to construct $\widehat{Q}$ by a cluster point of $(\widetilde{Q}_n)_{n\in \mathbb{N}}$, 
   a subsequence of convex combinations of $(Q_n)_{n\in \mathbb{N}}$, such that
   $$ \frac{d\widehat{Q}^r_y}{d\massP} = f = \lim_{n\to\infty}\frac{d\widetilde{Q}^r_n}{d\massP}\quad \mbox{and} \quad \langle \widetilde{Q}_n, e_T\rangle \longrightarrow \langle \widehat Q, e_T\rangle.$$ 
 Thanks to \cite[Lemma 3.2]{KS99}, the negative part of the sequence $\left\{V\left(y\frac{d\widetilde{Q}_n^r}{d\massP}\right)\right\}_{n\in\NN}$ is uniform integrable and thus Fatou's lemma applies for the proof of the optimality of $\widehat{Q}$. 
 It is obvious that the value function is convex due to the convexity of $V$ and moreover, the dual optimizer $\widehat{Q}$ is unique up to the singular part. \\
 
 %\vspace{2mm}

 The differentiability of the value function $v$ and its quantitative properties can be deduced as \cite[Lemma 4.2, Lemma 4.3]{CSW01}. 
 In particular, we have for $y>0$,
   \begin{equation} 
     \label{vu} v'(y)= -\left\langle \widehat{Q}_y^r, I\left(y\frac{d\widehat{Q}_y^r}{d\massP}\right)\right\rangle + \langle \widehat{Q}_y, e_T\rangle. 
   \end{equation}
 Then, for each $x>x_0:=-v'(\infty)$, there exists a unique $\widehat{y}>0$, such that $v'(\widehat{y})+x=0$, and $\widehat{y}$ attains the infimum of $\{v(y)+xy\}$.
 For simplicity, denote ${\widehat Q}:={\widehat Q}_{\widehat y}$.
 Let us consider $$\widehat{g}:=I\left(\widehat{y}\frac{d\widehat{Q}^r}{d\massP}\right)-x-e_T. $$
 It follows from \eqref{vu} that 
   \begin{equation}\label{-x=-Qr,x+X+QseT}
     \begin{aligned} 
       -x = v'({\widehat y}) = -\left\langle \widehat{Q}^r, I\left(\widehat{y}\frac{d\widehat{Q}^r}{d\massP}\right)\right\rangle + \left\langle \widehat{Q}, e_T  \right\rangle 
                             = -\left\langle \widehat{Q}^r, x+ \widehat{g}\right\rangle + \left\langle \widehat{Q}^s, e_T  \right\rangle. 
     \end{aligned}  
   \end{equation} 
  Similar as \cite[Lemma 4.4]{CSW01}, we can prove
    $$ \sup_{Q\in\cD^{\lambda}}\left\{\langle Q^r,x+\widehat{g}\rangle - \langle Q^s,e_T\rangle \right\}= \langle \widehat{Q}^r,x+\widehat{g}\rangle - \langle\widehat{Q}^s,e_T\rangle = x, $$
    which implies that $\langle Q, x+\widehat{g}\rangle\leq x$, hence $\langle Q, \widehat{g}\rangle\leq 0$, for each $Q\in \mathcal{D}^r$. 
  In particular, we obtain that $\massE[Z^0_Tg]\leq 0$, as $\{Z^0_T| Z\in\mathcal Z^\lambda_e(S)\}\subseteq \mathcal{D}^r$. 
  Obviously, $\widehat{g}$ is bounded from below, then by Theorem \ref{superreplication} we obtain that $\widehat{g}\in \mathcal{C}^\lambda$.\\
 % 
  %\vspace{2mm}
  
  Due to the strict positivity of $I(\cdot)$, we know $x+\widehat g+e_T>0$, thus
    \begin{equation*}
     \begin{aligned}
          \langle \widehat{Q}, e_T\rangle + x = \langle \widehat{Q}^r, x+\widehat{g}+e_T\rangle \leq \langle \widehat{Q}, x+\widehat{g}+e_T\rangle  
                                              \leq  \langle \widehat{Q}, e_T\rangle + \langle \widehat{Q}, x\rangle \leq \langle \widehat{Q}, e_T\rangle + x.
     \end{aligned}
    \end{equation*}
  Then, from the conjugate property between $U$ and $V$ and the definition of $\widehat{g}$, we have 
 \begin{align*}
  u(x)\geq \massE[U(x+\widehat{g}+e_T)] &=\massE\left[V\left(\widehat{y}\frac{d\widehat{Q}^r}{d\massP}\right)+\widehat{y}\frac{d\widehat{Q}^r}{d\massP}(x+\widehat{g}+e_T)\right] \\
                                        &= \massE\left[V\left(\widehat{y}\frac{d\widehat{Q}^r}{d\massP}\right)\right] + \widehat{y}\langle \widehat{Q},e_T\rangle + x\widehat{y} \\
                                        &=v(\widehat{y})+x\widehat{y}\geq u(x),
 \end{align*}
  where the last inequality could be proved by the superreplication theorem under transaction costs (Theorem \ref{superreplication}). 
  The proof is complete. 
\end{proof}

%\begin{remark}
%  \textcolor{red}{Here is a remark about the dynamic version:}
%   We notice that both the proof of \cite[Theorem 3.1]{CSW01} and our proof above for the frictional counterpart include an intermediate step, that is, to verify the following equality: 
   %  \begin{equation} \label{Qe+x=Qrx+g+e}
     %   \langle \widehat{Q}^r, x+\widehat{g}+e_T\rangle = \langle \widehat{Q}, x+\widehat{g}+e_T\rangle =\langle \widehat{Q}, e_T\rangle + x. 
     %\end{equation}
 % In fact,  we could have furthermore a dynamic version of \eqref{Qe+x=Qrx+g+e} in the frictionless case. First of all, the primal optimal process $\widehat{X}_T = \big(\widehat{H}\sint S\big)_T$ can be interpreted as a ``martingale'' under the finitely additive measure $\widehat{Q}$,  in the sense that
 %     $$ \langle\widehat{Q}, \widehat{X}_T\mathbf{1}_A\rangle = \langle\widehat{Q}, \widehat{X}_t\mathbf{1}_A\rangle,  $$
   %  for all  $A\in \cF_t$, where  $\widehat Q$ is the optimal solution to the dual problem.
%Moreover, the random endowment $e_T$ could be associated with a fictitious value process $e_t$, such that 
  %   $$ \big\langle \widehat{Q}, x+ (\widehat H\sint S)_\tau + e_\tau \big\rangle = x+ \langle\widehat{Q}, e_T\rangle. $$
  % See \cite[Remark 4.6]{CSW01} and \cite[Proposition 4.13]{GLY16a} for a thorough discussion.\\
   
 %  In the present setting under transaction costs, it is not clear so far, how to give a proper dynamic version for \eqref{Qe+x=Qrx+g+e}, which is left for future research. 
%\end{remark}

\section{Optimal investment when wealth may become negative}
 In this section we consider the problem with a utility function $U:\RR\to\RR$, which is defined and finitely valued everywhere on the real line. 
 We take the usual assumptions that $U$ is continuously differentiable, strictly increasing, strictly concave and satisfies Inada conditions: 
  \begin{align*}
    U'(-\infty):=\lim_{x\to -\infty}U'(x) = \infty \quad \mbox{ and } \quad U'(\infty):=\lim_{x\to\infty}U'(x)= 0. 
  \end{align*}
  
 We also assume that the function $U$ has reasonable asymptotic elasticity as defined in \cite{Sch01}. 
 
 \begin{assumption}  \label{RAE+-}
  The function $U:\RR\to\RR$ satisfies the reasonable asymptotic elasticity, i.e., 
      \begin{align}\label{RAER}
        AE_{-\infty}(U):=\liminf_{x\to -\infty}\frac{xU'(x)}{U(x)}>1 \quad \mbox{ and } \quad AE_{+\infty}(U):=\limsup_{x\to\infty}\frac{xU'(x)}{U(x)}<1.
      \end{align}
 \end{assumption}

 Our aim of this section is to study the optimization problem  
  \begin{equation}  \label{PPP}
    \massE[U(x+g+e_T)] \to\max!,  \qquad g\in\cC^{\lambda},
  \end{equation}
  where $\cC^\lambda$ is defined in the previous section as the collection of all admissible liquidation values at $T$ (see Definition \ref{admi} and (\ref{equacl})).
  Then,  the corresponding value function $g$ is given by 
   $$ u(x):= \sup_{g\in\cC^{\lambda}}\massE[U(x+g+e_T)]. $$
 As pointed out in \cite{CS16b}, once the utility function supports negative wealth, the optimum of (\ref{PPP}) 
 may not be attained in $\cC^\lmd$ even without random endowment (compare with \cite{Sch01, Owe02} in the frictionless case).
 Hence, similarly to \cite{CS16b, Bou02}, we consider the optimization problem (\ref{PPP}) over an enlarged set $\cC^{\lambda}_U$ defined as below:
  \begin{eqnarray*}
    \cC^{\lambda}_{U}:= \left\{g\in L^0(\massP; \RR\cup\{\infty\})\, \Bigg| \,  \begin{array}{r}
                                                                                 \exists \{g_n\}_{n\in \mathbb{N}}\subseteq \cC^{\lambda}\,\mbox{ s.t. }\, U(x+g_n+e_T)\in L^1(\massP) \mbox{ and } \\
                                                                                 U(x+g_n+e_T) \xrightarrow{L^1(\massP)} U(x+g+e_T)
                                                                              \end{array}
                        \right\}. 
  \end{eqnarray*}
It is obvious that the enlargement of primal domain will not alter the optimal value, that is, 
%Then, the optimization problem (\ref{PPP}) is equivalent to the following reformulated one:
% We now introduce the optimization problem:  
   \begin{equation}  \label{PPP_Cu}
     \massE[U(x+g+e_T)] \to\max!,  \qquad g\in\cC_U^{\lambda},
   \end{equation} 
% It follows from the definition of $\cC^{\lambda}_U$ that we have clearly the equality 
yields $$ u(x) = \sup_{g\in\cC^{\lambda}}\massE[U(x+g+e_T)] = \sup_{g\in\cC^{\lambda}_U}\massE[U(x+g+e_T)]. $$
In particular, the fact that $U(x+g_n+e_T)\xrightarrow{L^1(\massP)}U(x+g+e_T)$ implies  $g_n\to g$ in $\massP$, 
% respect to convergence in probability, 
since $U$ is strictly increasing. \\

 %\vspace{2mm}
 
In order to rule out trivial cases, we shall make the following assumption, which ensures that $u(x)$ is finitely valued.  
  \begin{assumption}  \label{u(x)U(infty)}
    The value function satisfies $u(x)<U(\infty)$, for some $x\in\RR$. 
  \end{assumption}
 To formulate the dual problem of \eqref{PPP}, we introduce the conjugate function of $U(x)$: 
  $$ V(y):=\sup_{x\in\RR}\big(U(x)-xy\big), \qquad y>0, $$
  which is a continuously differentiable, strictly convex function satisfying 
   $$ V(0)=U(\infty), \, V(\infty)=\infty, \, V'(0)=-\infty, \, V'(\infty)=\infty. $$
 We also have the formula $$ V(y)= U\big(I(y)\big)-yI(y), $$
  where $I$ is the inverse function $(U')^{-1}$, which equals to $-V'$. \\
%  
% \vspace{2mm}
  
 Without loss of generality we assume that $U(0)>0$ after possibly adding a constant to $U$. 
 This implies the strict positivity of $V(y)$ which ensures the results \cite[Corollary 4.2]{Sch01}.

 Now, we are in a position to define the dual problem.
  \begin{equation} \label{DDP}
    v(y):= \inf_{(Z^0,Z^1)\in\cZ^{\lambda}_a(S)}\massE\left[V(yZ^0_T)+yZ^0_Te_T\right] = \inf_{Z^0_T\in\cM^{\lambda}_a}\massE\left[V(yZ^0_T)+yZ^0_Te_T\right].
  \end{equation}

 \begin{remark}
   For all $g\in\cC^{\lambda}$, $y>0$ and $(Z^0,Z^1)\in\cZ^{\lambda}_a(S)$, by the superreplication theorem under transaction costs (Theorem \ref{superreplication}), we have 
    $$ \massE[U(x+g+e_T)] \leq \massE\left[V(yZ_T^0)+ yZ_T^0(x+g+e_T)\right],  $$
    and therefore 
    $$ u(x) \leq \inf_{y>0}\{v(y)+xy\}. $$
 \end{remark}

Here below is the main result of this paper: 

 \begin{theorem}  \label{maintheoremR}
   Under Assumptions \ref{Sassumption}, \ref{CPSassumption}, \ref{RAE+-} and \ref{u(x)U(infty)} and moreover that $S$ is locally bounded, we have 
   \begin{enumerate}
    \item The value functions $u$ (respectively, $v$) is finitely valued, strictly concave (respectively, convex), continuously differentiable function defined on $\RR$ (respectively, $\RR_+$).
          The functions $u$ and $v$ are conjugate and satisfy
           $$ u'(-\infty)=\infty, \quad u'(\infty)=0, \quad v'(0)=-\infty, \quad v'(\infty)=\infty. $$
          Moreover, the function $u$ has reasonable asymptotic elasticity. 
    \item For $y>0$, the optimal solution $\widehat{Z}^0_T(y)\in\cM^{\lambda}_a$ to the dual problem \eqref{DDP} exists and is unique. 
          The map $y\mapsto \widehat{Z}_T^0(y)$ is continuous in the variation norm. 
    \item For $x\in\RR$, the optimal solution $\widehat{g}(x)$ to the primal problem \eqref{PPP} exists in $C^\lambda_U$, which is unique and satisfies 
            $$ x+\widehat{g}(x)+e_T = I\left(\widehat{y}\widehat{Z}^0_T(\widehat{y})\right),  $$
            where $\widehat{y}=u'(x)$. 
    \item We have the formulae for marginal utility:
            \begin{align*}
              v'(y) &= \massE\left[\widehat{Z}^0_T(\widehat{y})\left(V'\big(\widehat{y}\widehat{Z}^0_T(\widehat{y})\big)+ e_T\right)\right];  \\
              u'(x) &= \massE\left[U'\left(x+\widehat{g}(x)+e_T\right) \right];  \\
             xu'(x) &= \massE\left[\big(x+\widehat{g}(x)\big)U'\left(x+\widehat{g}(x)+e_T\right) \right]. 
            \end{align*}
   \end{enumerate}
 \end{theorem}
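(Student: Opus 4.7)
The strategy is to extend Theorem \ref{mainthm1} from the positive half-line to utility functions on all of $\RR$, mimicking the scheme of \cite{Sch01, Owe02}: construct a sequence of half-line subproblems, solve each via Theorem \ref{mainthm1}, and pass to the limit. The local boundedness of $S$ is the new key ingredient that ensures the limiting dual optimizer remains in $\cM^{\lambda}_a$ with no singular part.

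For the truncation, for each $n\in\NN$ restrict to strategies with admissibility constant $n$, namely $\cC^{\lambda}_n := \{V^{\mathrm{liq}}_T(\varphi) \mid \varphi\in\cA^{\lambda}_{adm}(0),\ V^{\mathrm{liq}}_\tau(\varphi)\geq -n \textnormal{ for all } \tau\}$. After the shift $h=n+g$ this becomes a ``positive wealth'' problem starting from $n$. Introduce an auxiliary utility $U^{(n)}\colon(0,\infty)\to\RR$ that agrees with $z\mapsto U(x-n+z)$ on $[\rho+1,\infty)$ and is smoothly modified on $(0,\rho+1]$ so as to be strictly increasing, strictly concave, $C^1$, with $(U^{(n)})'(0^+)=\infty$ and $AE_{+\infty}(U^{(n)})=AE_{+\infty}(U)<1$ inherited from Assumption \ref{RAE+-}. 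Applying Theorem \ref{mainthm1} to $U^{(n)}$ with initial wealth $n$ and random endowment $e_T$ yields value functions $u^{(n)}, v^{(n)}$, a dual optimizer $\widehat{Q}_n\in\cD^{\lambda}$ and a primal optimizer $\widehat{h}_n$. Monotonicity in $n$, together with $AE_{-\infty}(U)>1$ which controls the cost of further relaxing the admissibility bound, yields $u^{(n)}(x)\nearrow u(x)$ and $v^{(n)}(y)\searrow v(y)$; finiteness of $v(y)$ for all $y>0$ then propagates from Assumption \ref{u(x)U(infty)}.

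Next, pass to the limit on the dual side: using $\sigma(ba,L^{\infty})$-compactness of $\cD^{\lambda}$ and \cite[Lemma A.1]{CSW01}, extract a cluster point $\widehat{Q}$ of convex combinations of $(\widehat{Q}_n)$ whose regular densities converge a.s.\ to some $\widehat{Z}^0_T$; Fatou applied with uniform integrability of $V^{-}(y\widehat{Q}^r_n)$ (guaranteed by Assumption \ref{RAE+-} and $V(0)=U(\infty)<\infty$) shows $\widehat{Q}$ attains $v(y)$. \textbf{The main obstacle} is to exclude the singular part $\widehat{Q}^s$. This is precisely where local boundedness of $S$ enters: along a localizing sequence $(\tau_k)$ rendering $S^{\tau_k}$ bounded, one can build $\lambda$-consistent price systems whose $Z^0$-component is a true uniformly integrable martingale, making $\cM^{\lambda}_a$ dense-in-the-regular-part of $\cD^{\lambda}$ in a precise functional-analytic sense. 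Combined with strict convexity of $V$ and $V'(0)=-\infty$, any nonzero singular mass would strictly decrease the dual functional when transferred into the regular part, contradicting optimality. This is the transaction-cost analogue of the argument in \cite[Section 3]{Owe02}, and it forces $\widehat{Z}^0_T(y) := d\widehat{Q}^r_y/d\massP \in \cM^{\lambda}_a$.

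Once $\widehat{Z}^0_T(y)\in\cM^{\lambda}_a$ is secured, uniqueness is immediate from strict convexity of $V$. For (iii), define $\widehat{g}(x):=I(\widehat{y}\widehat{Z}^0_T(\widehat{y}))-x-e_T$ with $\widehat{y}:=u'(x)$; the superreplication theorem (Theorem \ref{superreplication}) combined with an $L^1(U)$-approximation by the shifted optimizers $\widehat{h}_n-n$ places $\widehat{g}(x)\in\cC^{\lambda}_U$ and verifies its optimality via the pointwise conjugacy $U(x+\widehat{g}+e_T) = V(\widehat{y}\widehat{Z}^0_T) + \widehat{y}\widehat{Z}^0_T(x+\widehat{g}+e_T)$. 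Continuous differentiability, conjugacy, and boundary behavior of $u, v$ transfer from the analogous properties of $u^{(n)}, v^{(n)}$; reasonable asymptotic elasticity of $u$ is inherited from $U$ via Assumption \ref{RAE+-} as in \cite[Section 6]{Sch01}. The three identities in (iv) follow by differentiating $v(y)+xy=\massE[U(x+\widehat{g}+e_T)]$ at $(\widehat{y},\widehat{Z}^0_T)$ together with the envelope formula. Continuity of $y\mapsto \widehat{Z}^0_T(y)$ in variation norm follows from strict convexity of $V$, uniqueness, and a standard Koml\'os-type argument along the lines of \cite[Proposition 4.1]{Owe02}.
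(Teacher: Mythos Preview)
Your overall strategy---truncate to reduce to the positive half-line problem of Theorem~\ref{mainthm1}, then pass to the limit following \cite{Sch01,Owe02}---is correct and is what the paper does. The gap is in the crucial step of showing the limiting dual optimizer lies in $\cM^\lambda_a$, where you misidentify the role of local boundedness and conflate two separate issues.

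First, the vanishing of the singular part is not obtained by a transfer-of-mass argument (for which you would have to verify that the perturbed measure stays in $\cD^\lambda$ and that the functional strictly decreases---neither is obvious). As in \cite{Owe02}, it comes instead from the explicit penalty $(n+1)y\big(1-\massE\big[\tfrac{dQ^r}{d\massP}\big]\big)$ that appears in the truncated dual $v_n$ (see \eqref{vder}); this forces $\big\|\tfrac{d\widehat{Q}^r_n}{d\massP}\big\|_{L^1}\to 1$, so the limit $\widehat{\massQ}$ is countably additive. Second, and more seriously: even once $\widehat{\massQ}$ is known to be a probability measure, membership in $\cD^{\lambda,r}$ does \emph{not} by itself give $\tfrac{d\widehat{\massQ}}{d\massP}\in\cM^\lambda_a$. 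The latter requires the density to be the first component $Z^0_T$ of some $(Z^0,Z^1)\in\cZ^\lambda_a(S)$, which is strictly stronger than being a separating measure for $\cC^\lambda$. In the frictionless setting of \cite{Owe02} this distinction disappears, but under transaction costs it does not.

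The paper closes this gap in two steps. A Hahn--Banach argument together with Theorem~\ref{superreplication} shows $\cD^\lambda$ is the $\sigma(ba,L^\infty)$-closure of $\cM^\lambda_a$; this uses no local boundedness. The decisive ingredient is then Lemma~\ref{closednessMa}: $\cM^\lambda_a$ is closed in $L^1(\massP)$. Here is where local boundedness of $S$ genuinely enters, and not in the way you describe. The first component $Z^0$ of any consistent price system is a true martingale by definition, so nothing is gained there. Rather, local boundedness controls the \emph{second} component: since $Z^{n,1}\le Z^{n,0}S$, along a localizing sequence rendering $S^{\tau_m}$ bounded the stopped sequences $(Z^{n,1}_{\tau_m})_n$ are uniformly integrable, so the limit pair $(\widehat Z^0,\widehat Z^1)$ is again in $\cZ^\lambda_a(S)$. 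Without this closedness lemma your argument leaves open the possibility that the dual optimizer lies in $\cD^{\lambda,r}\setminus\cM^\lambda_a$, in which case the dual problem \eqref{DDP} as formulated over $\cM^\lambda_a$ would not be attained and item~(2) of the theorem would fail.
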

 
In the light of \cite{Sch01, Owe02}, the proof of this theorem consists of successive  approximations. 
We first construct an increasing sequence $(U_n)_{n\in\NN}$, such that for each $n\in\NN$,
  \begin{itemize}
    \item $U_n=U$ on $[-n, \infty)$;
    \item $-\infty< U_n\leq U$ on $(-(n+1), -n)$;
    \item $U_n=-\infty$ on $(-\infty, -(n+1)]$;
    \item $U_n$ is increasing, strictly concave, continuously differentiable on $(-(n+1), \infty)$, and satisfies
           $$ \lim_{x\to -(n+1)}U_n(x)=-\infty, \qquad \lim_{x\to -(n+1)}U_n'(x)=\infty.$$
  \end{itemize}
  
 Define for $y\geq 0$,
    $$ V_n(y):=\sup_{x\in\RR}\left\{U_n(x)-xy\right\} = U_n\left(I_n(y)\right)-yI_n(y),$$ 
   where $ I_n:=\left(U_n'\right)^{-1}= -V_n'$. \\

Without loss of generality, we could choose the sequence  $(U_n)_{n\in\NN}$, such that there exists a constant $C$ independent of $n$ simultaneously valid for the estimates of $V$ and all $V_n$ in \cite[Corollary 4.2]{Sch01}.\\
%
%\vspace{2mm}

Define $\widetilde{U}_{n}(\widetilde{x}):= U_{n}\big(\widetilde{x}-(n+1)\big)$, which is a finitely valued for $\widetilde{x}>0$ and satisfies Inada condition at $0$ and $+\infty$, 
  and the reasonable asymptotic elasticity condition at $+\infty$. 
On the one hand, we consider the following utility maximization problem, for $\widetilde{x}>\widetilde{x}_0$,
 \begin{equation}\label{primalmod}
   \widetilde{u}_{n}(\widetilde{x}):=\sup_{g\in\cC^{\lambda}}\massE\left[\widetilde{U}_{n}(\widetilde{x}+g+e_T)\right], 
 \end{equation}
 which has a unique optimal solution $\widetilde{g}_n(\widetilde{x})\in\cC^{\lambda}$. On the other hand, the dual problem of (\ref{primalmod}) is formulated by
\begin{equation}  \label{tildev_n}
   \widetilde{v}_n(y):=\inf_{Q\in\cD^{\lambda}}\left\{\massE\left[\widetilde{V}_n\left(y\frac{dQ^r}{d\massP}\right)\right]+y\langle Q,e_T\rangle\right\},
 \end{equation}
 where $\widetilde{V}_n$ is the conjugate of $\widetilde{U}_n$. Applying Theorem \ref{mainthm1} in the previous section, we know that 
 for ${y}>0$, there exists a unique solution $\widehat{Q}_n(y)\in\cD^{\lambda}$ to the problem \eqref{tildev_n}, and moreover, for $\widehat{y}=\widetilde{u}_n'(\widetilde{x})$,
     \begin{equation*}      \widetilde{g}_n(\widetilde{x})= -\widetilde{V}_n'\left(\widehat{y}\frac{d\widehat{Q}^r_n(\widehat{y})}{d\massP}\right)- \widetilde{x}-e_T.
    \end{equation*}
 Denote $\widetilde{x}:=x+n+1$. We shift back the utility maximization problem by defining
  \begin{equation}  \label{u_nPP}
     u_n(x) :=  \sup_{g\in\cC^{\lambda}}\left[U_n(x+g+e_T)\right]= \widetilde{u}_{n}(\widetilde{x}). 
 \end{equation} 
 It is obvious that the unique solution to the above problem $\widehat{g}_n(x):=\widetilde{g}_n(\widetilde{x})$ and moreover $u'_n(x)=\widetilde u'_n(\widetilde x)$.  Then, the conjugate of $u$ is given by 
  \begin{equation}
    \label{vmod} v_n(y) = \inf_{Q\in\cD^{\lambda}}\left\{\massE\left[\widetilde{V}_n\left(y\frac{dQ^r}{d\massP}\right)\right]+y\langle Q,e_T\rangle\right\} + (n+1)y. 
  \end{equation}
 Taking into account that $V_n(y)=\widetilde{V}_n(y) + (n+1)y$, we have 
  \begin{equation}\label{vder}
    \begin{aligned}
      v_n(y) &= \inf_{Q\in\cD^{\lambda}}\left\{ \massE\left[V_n\left(y\frac{d{Q}_n^r(y)}{d\massP}\right)\right]+y\left\langle {Q}_n(y),e_T\right\rangle 
                     + (n+1)y\left(1-\massE\left[\frac{d{Q}_n^r(y)}{d\massP}\right]\right)\right\}.
    \end{aligned}
  \end{equation}
 Summing up, for $\widehat{y}=u'_n(x)=\widetilde{u}_n'(\widetilde{x})$, the unique solution $\widehat{Q}_n(\widehat{y})$ solving \eqref{tildev_n} is also a solution to \eqref{vmod} and \eqref{vder}. 
 Besides, the solution to \eqref{u_nPP} admits the following representation
  \begin{equation}  \label{x+g_n+e_T=-Vn'(yQnr)}
      x+\widehat{g}_n(x)+e_T = -V_n'\left(y\frac{d\widehat{Q}^r_n(y)}{d\massP}\right).
  \end{equation}
    
 For fixed $y>0$, $v_n(y)$ is increasing in $n$. Since $V_n\leq V$ and $\cM^{\lambda}_a\subseteq\cD^{\lambda}$, we have 
   \begin{equation}  \label{v_nleqv}
     \begin{aligned}  
       v_n(y) =& \inf_{Q\in\cD^{\lambda}}\left\{\massE\left[V_n\left(y\frac{dQ^r}{d\massP}\right)\right]+y\left\langle Q,e_T\right\rangle 
                     + (n+1)y\left(1-\massE\left[\frac{dQ^r}{d\massP}\right]\right)\right\}  \\
              \leq& \inf_{Z_T^0\in\cM^{\lambda}_a}\massE\big[V\left(yZ^0_T\right)+yZ_T^0e_T\big] = v(y),
     \end{aligned}
   \end{equation}
   which means that $v_n$ is dominated by $v$. 
 Therefore, we may define now the function 
   $$ v_{\infty}(y):=\lim_{n\to\infty}v_n(y), \quad y>0,$$
   which turns out later to be the function $v$. 
 In addition, we could prove in by applying \cite[Corollary 4.2]{Sch01} that $v_\infty$ is finitely valued and dominated by $v$, which is as same as \cite[Lemma 2.2]{Owe02}. 
 
 \begin{proof}[Proof of Theorem \ref{maintheoremR}]
  The first step of the proof is to consider the convergence of $v_n(y_n)\rightarrow v_\infty(y)$, as $n\rightarrow \infty$, provided $(y_n)_{n\in \mathbb{N}}$ converging to $y$ in the domain of $v_\infty$. 
  This can be proved by recalling \cite[Lemma 2.3]{Owe02}. 
  In particular, for each $y_n$, denote by $\widehat{Q}_n(y_n)\in\cD^{\lambda}$ the corresponding solution to the  dual problem $v_n(y_n)$. 
  Following the lines of the proof to \cite[Lemma 3.1]{Owe02}, we can show that there exists a measure $\widehat{\massQ}(y)$ such that 
    \begin{equation} \label{convl1}
      \frac{d\widehat{Q}^r_n(y_n)}{d\massP}\xrightarrow{L^1(\massP)} \frac{d\widehat{\massQ}(y)}{d\massP}
    \end{equation}
    and $\Vert \frac{d\widehat Q(y)}{d\massP}\Vert_{L^1(\massP)}=1$. Form \eqref{convl1}, it is clear that 
     $$ \widehat{Q}_n(y_n)\xrightarrow{ba} \widehat{\massQ}(y).$$
  In what follows, we shall show that there exists a couple $\widehat Z(y):=\big(\widehat Z^0(y), \widehat Z^1(y)\big)\in \cZ^\lambda_a(S)$ such that $\widehat Z^0_T(y)=\tfrac{d\widehat{\massQ}(y)}{d\massP}$.   \\
 
%  \vspace{2mm}

  We claim that set $\cD^{\lambda}$ is the $\sigma(ba,L^{\infty})$-closure of $\cM^{\lambda}_a$. 
  Indeed, we only need to prove $\overline{\cM^{\lambda}_a}^{\sigma(ba,L^{\infty})}\supseteq\cD^{\lambda}$. 
  Assume that there exists an element $\widetilde{Q}\in\cD^{\lambda}$ satisfying $\widetilde{Q}\notin\overline{\cM^{\lambda}_a}^{\sigma(ba,L^{\infty})}$. 
  Due to the convexity of $\cM^{\lambda}_a$, its $\sigma(ba,L^{\infty})$-closure $\overline{\cM^{\lambda}_a}^{\sigma(ba,L^{\infty})}$ is also convex. 
  By the Hahn-Banach theorem, there exists $f\in L^{\infty}$, such that $\langle\widetilde{Q},f\rangle>\alpha$ and 
     $$ \langle Q,f\rangle \leq \alpha, \qquad \forall Q\in\overline{\cM^{\lambda}_a}^{\sigma(ba,L^{\infty})}, $$
     for some $\alpha\in\RR$. 
  In particular, $\massE[Z_T^0f]\leq \alpha$ for all $Z^0_T\in\cM_a^{\lambda}$, which follows by Theorem \ref{superreplication} that $f\in\cC^{\lambda}(\alpha)$, therefore $f-\alpha\in\cC^{\lambda}$. 
  By the definition of $\cD^{\lambda}$, we obtain that 
     $$\langle \widetilde{Q},f-\alpha\rangle = \langle \widetilde{Q},f\rangle - \alpha \leq 0, $$
     which contradicts to the fact that $\langle\widetilde{Q},f\rangle>\alpha$.\\
%     
% \vspace{2mm}
     
  Then, by \cite[Proposition A.1]{GLY16a}, there exists a sequence $\big(Z^{n,0}_T(y)\big)_{n\in\NN}\subseteq\cM^\lambda_a$, such that $Z^{n,0}_T(y)$ converges to $\frac{d\widehat\massQ(y)}{d\massP}$ in probability. 
  As $\big\| Z^{n,0}_T(y)\big\Vert_{L^1(\massP)}=\big\Vert\frac{d\widehat\massQ(y)}{d\massP}\big\Vert_{L^1(\massP)}=1$, 
    it follows by Scheff\'e's lemma that $Z^{n,0}_T(y)$ converges to $\frac{d\widehat\massQ(y)}{d\massP}$ in $L^1(\massP)$.
  By Lemma \ref{closednessMa} below, we deduce $\frac{d\widehat\massQ(y)}{d\massP}\in\cM^\lambda_a$. 
  Therefore, we obtain a couple $\widehat Z(y):=\big(\widehat Z^0(y), \widehat Z^1(y)\big)\in \cZ^\lambda_a(S)$ such that $\widehat Z^0_T(y)=\tfrac{d\widehat{\massQ}(y)}{d\massP}$.\\
  
  %\vspace{2mm}
  
  Following the proof of \cite[Corollary 3.2.(i)]{Owe02}, we may show that the map $y\mapsto\widehat{Z}^0_T(y)$ is continuous in the $L^1(\massP)$-norm 
   $$ \lim_{n\to\infty}v_n(y_n) = v(y) =\massE\left[V\big(y\widehat{Z}_T^0(y)\big)+y\widehat{Z}_T^0(y)e_T\right],$$
   and thus $\widehat{Z}_T^0(y)\in\cM^\lambda_a$ is the unique minimizer of the dual problem \eqref{DDP}. 
  The dual value function $v$ is strictly convex. \\

From now on, by following the lines of the proof of \cite[Theorem 1.1]{Owe02}, we may show the other assertions analogously.
 \end{proof}

\begin{lemma}  \label{closednessMa}
  The set $\cM^\lambda_a$ is closed with respect to $L^1(\massP)$-topology.
\end{lemma}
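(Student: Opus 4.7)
The aim is to show: if $(f_n)_{n\geq 1}\subseteq \cM^\lambda_a$ and $f_n\to f$ in $L^1(\massP)$, then $f\in \cM^\lambda_a$. My plan is to build a witnessing pair $(Z^0,Z^1)\in \cZ^\lambda_a(S)$ with $Z^0_T=f$ by extracting Komlos-type convex combinations. For each $n$, fix $(Z^{n,0},Z^{n,1})\in \cZ^\lambda_a(S)$ and the associated $\widetilde{S}^n\in [(1-\lambda)S,S]$ with $Z^{n,1}=Z^{n,0}\widetilde{S}^n$ and $Z^{n,0}_T=f_n$. Passing to a subsequence, I may assume $f_n\to f$ a.s.; then $Z^0_t:=\massE[f\mid \cF_t]$ is a c\`adl\`ag UI martingale with $Z^0_0=1$ and $Z^{n,0}_t\to Z^0_t$ in $L^1$ for every $t$.

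The crucial preliminary step exploits the local boundedness of $S$: choose stopping times $\sigma_k \uparrow T$ a.s.\ with $\Vert S^{\sigma_k}\Vert_\infty \le K_k$. Since $0\le \widetilde{S}^n\le S$, we have $Z^{n,1}_{\cdot \wedge \sigma_k}\le K_k Z^{n,0}_{\cdot \wedge \sigma_k}$; being a nonnegative local martingale dominated by the UI martingale $K_k Z^{n,0}$, the stopped process $Z^{n,1}_{\cdot\wedge\sigma_k}$ is in fact a UI martingale, and in particular $(Z^{n,1}_{\sigma_k})_n$ is $L^1$-bounded by $K_k$ for each $k$.

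A Komlos-type theorem combined with a Cantor diagonal extraction over $k$ now produces common forward convex combinations $\widetilde{Z}^{n,\ell}:=\sum_{j=n}^{N_n}\alpha^n_j Z^{j,\ell}$ ($\ell=0,1$) such that $\widetilde{Z}^{n,0}_T\to f$ a.s., and $\widetilde{Z}^{n,1}_{\sigma_k}\to \eta_k$ a.s.\ for every $k$. The bound $\widetilde{Z}^{n,1}_{\sigma_k}\le K_k\widetilde{Z}^{n,0}_{\sigma_k}$ together with the $L^1$-convergence of $\widetilde{Z}^{n,0}_{\sigma_k}=\massE[\widetilde{Z}^{n,0}_T\mid \cF_{\sigma_k}]$ (via optional sampling) upgrades this to $L^1$-convergence $\widetilde{Z}^{n,1}_{\sigma_k}\to \eta_k$. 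Letting $Z^1_{\cdot \wedge \sigma_k}$ be the c\`adl\`ag UI martingale closed by $\eta_k$, the identity $\widetilde{Z}^{n,1}_{\sigma_k}=\massE[\widetilde{Z}^{n,1}_{\sigma_{k+1}}\mid \cF_{\sigma_k}]$ passes to the $L^1$-limit to yield the tower relation $\eta_k = \massE[\eta_{k+1}\mid \cF_{\sigma_k}]$; consequently the family $\{Z^1_{\cdot \wedge \sigma_k}\}_k$ is consistent and pastes into a single nonnegative c\`adl\`ag process $Z^1$ on $[0,T]$ that is a local martingale with localizing sequence $(\sigma_k)_k$.

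Finally, for each $t$ the ratio $\widetilde{Z}^{n,1}_t/\widetilde{Z}^{n,0}_t=\sum_j (\alpha^n_j Z^{j,0}_t/\widetilde{Z}^{n,0}_t)\,\widetilde{S}^j_t$ is a $[(1-\lambda)S_t,S_t]$-valued weighted average of the $\widetilde{S}^j_t$. Passing to the a.s.\ limit on $\{Z^0_t>0\}$ and choosing any measurable selector in $[(1-\lambda)S_t,S_t]$ on $\{Z^0_t=0\}$ produces an adapted $\widetilde{S}\in [(1-\lambda)S,S]$ with $Z^1=Z^0\widetilde{S}$, so $(Z^0,Z^1)\in \cZ^\lambda_a(S)$ and $f\in\cM^\lambda_a$. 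The main obstacle is that Fatou limits of local martingales are generically only supermartingales; the local boundedness of $S$ circumvents this by forcing each stopped $Z^{n,1}_{\cdot\wedge\sigma_k}$ to be a UI martingale dominated by a UI martingale, so that $L^1$-limits preserve the martingale property at each localizing stage. This is precisely where the local boundedness assumption of Theorem \ref{maintheoremR} is indispensable.
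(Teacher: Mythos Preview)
Your argument is correct and follows essentially the same route as the paper's: take forward convex combinations, use local boundedness of $S$ so that each stopped second component $\widetilde{Z}^{n,1}_{\cdot\wedge\sigma_k}$ is a uniformly integrable martingale dominated by $K_k\widetilde{Z}^{n,0}$, pass to $L^1$-limits (thereby preserving the martingale property) at every localizing level, and verify the bid--ask constraint \eqref{J10} in the limit. The only genuine difference is in how the convex-combination limits are produced: the paper invokes \cite[Theorem~2.7]{CS16a}, which delivers in one stroke optional strong supermartingale limits $(\widehat{Z}^0,\widehat{Z}^1)$ with convergence in probability at \emph{every} stopping time, whereas you run a Koml\'os--diagonal extraction by hand at the countable family $\{\sigma_k\}$ and then recover convergence at every $t$ through the martingale closure $Z^1_{t\wedge\sigma_k}=\massE[\eta_k\mid\cF_{t\wedge\sigma_k}]$. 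Your implementation is more elementary and self-contained; the paper's is shorter but leans on a heavier compactness result. One cosmetic point: write $\sigma_k\uparrow\infty$ (so that eventually $\sigma_k\geq T$ a.s.) rather than $\sigma_k\uparrow T$, to ensure the pasted process $Z^1$ is defined on all of $[0,T]$ without a separate argument at the terminal time.
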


\begin{proof} 
  Consider the sequence $\big(Z^{n,0}_T\big)_{n\in\mathbb N}\subseteq \cM^\lambda_a$, associated with absolutely continuous consistent price systems 
    $Z^n:=(Z^{n,0}_t, Z^{n,1}_t)_{0\leq t\leq T}\in \cZ^\lambda_a(S)$. 
  Moreover, assume that 
    \begin{equation} \label{convl1q}
       Z^{n,0}_T\xrightarrow{L^1(\massP)} Z_T^0, 
    \end{equation}
    for some $Z_T^0\in L^1(\massP)$.
  We now show that $Z_T^0\in \cM^\lambda_a$. 
  Notice that for each $n$, the couple of processes $(Z^{n,0}, Z^{n,1})$ are nonnegative local martingales and thus are supermartingales. 
  By \cite[Theorem 2.7]{CS16a}, there exists a sequence $(\widetilde{Z}^{n,0}, \widetilde{Z}^{n,1})_{n\in \mathbb{N}}$, which is a subsequence of convex combinations of $(Z^{n,0}, Z^{n,1})_{n\in \mathbb{N}}$, 
    i.e., $(\widetilde{Z}^{n,0}, \widetilde{Z}^{n,1})\in \conv\big((Z^{n,0}, Z^{n,1}), (Z^{n+1,0}, Z^{n+1,1}), \cdots\big)$
    and there also exists a couple of nonnegative optional strong supermartingales $(\widehat{Z}^0, \widehat{Z}^1)$ (not necessarily c\`adl\`ag, cf.~\cite[Appendix I]{DM82}) such that for every $[0,T]$-valued stopping time $\sigma$, we have
   \begin{equation}\label{pro}
     \widetilde{Z}^{n,i}_\sigma\stackrel{\massP}{\longrightarrow}  \widehat{Z}^i_\sigma,\quad {\rm as}\ n\rightarrow \infty,\quad i=0, 1.
   \end{equation}
  Since the set $\cZ^\lambda_a(S)$ is closed under countable convex combinations (cf.~\cite[Lemma A.1]{CS15}), 
    we have for each $n\in\NN$, $(\widetilde{Z}^{n,0}, \widetilde{Z}^{n,1})\in \cZ^\lambda_a(S)$ and particularly, $\widetilde Z^{n,0}$ is a martingale. 
  From \eqref{convl1q}, $\widetilde{Z}^{n,0}_T$ converges to $\widehat Z^0_T$ in $L^1(\massP)$. 
  Therefore, we can prove that $\widehat{Z}^0$ is a martingale by Fatou's Lemma and obviously, $\widehat{Z}^0_T=Z^0_T$. 
  Moreover, for every $[0, T]$-valued stopping time $\tau$, the sequence $\big(\widetilde{Z}^{n,0}_\tau\big)_{n\in \mathbb{N}}$ is uniformly integrable.\\

  Since $S$ is locally bounded and $(\widetilde{Z}^{n,1})_{n\in \mathbb{N}}$ is a sequence of local martingales, one can choose a sequence of stopping times $(\tau_m)_{m\in\mathbb{N}}$ increasing and converges almost surely to $\infty$, 
     such that each stopped process $S^{\tau_m}$ is bounded and, for each $n$, $\widetilde{Z}^{n,1}_{\cdot\wedge\tau_m}$ is a martingale. 
  From $(\ref{J10})$, we have $\widetilde{Z}^{n,1}\leq \widetilde{Z}^{n,0} S$. 
  Thus, for each $m$, $\big(\widetilde{Z}^{n,1}_{\tau_m}\big)_{n\in \mathbb{N}}$ is uniformly integrable, which implies that $\widetilde{Z}^{n,1}_{\tau_m}$ converges to $\widehat{Z}^1_{\tau_m}$ in $L^1(\massP)$ 
     and $\widetilde{Z}^{n,1}_{\cdot\wedge\tau_m}$ is a martingale. 
  Consequently, we conclude that $(\widehat{Z}^0, \widehat{Z}^1)\in \cZ^\lambda_a(S)$ by observing that \eqref{J10} is satisfied by $(\widehat{Z}^0, \widehat{Z}^1)$.
 \end{proof}

 We now consider the question, whether there exists a self-financing trading strategy $(\widehat{\varphi}^0,\widehat{\varphi}^1)$ under transaction costs $\lambda$, that attains the solution $\widehat{g}(x)$ to \eqref{PPP}, 
   i.e., $V^{\textnormal{liq}}_T\big(\widehat{\varphi}\big)=\widehat{g}(x)$. 
 As in \cite{CS16b}, we define as follows the set $\cA^{\lambda}_U(x)$ of all attainable trading strategy.  We simply note $\cA^\lambda_U$ by $\cA^\lambda_U(0)$. 
 \begin{definition}
 We call $\varphi:=(\varphi^0, \varphi^1)$ attainable trading strategy, if $\varphi$ is a  predictable and of  finite variation, starting at $(\varphi^0_0,\varphi^1_0)=(x,0)$, satisfying the $\lambda$-self-financing condition \eqref{SF} 
   and such that there exists a sequence $(\varphi^{n,0},\varphi^{n,1})_{n\in\NN}\subseteq \cA^{\lambda}_{adm}(x)$ varifying that $U\big(V^{\textnormal{liq}}_T(\varphi^n)+e_T\big)\in L^1(\massP)$, 
    $$ U\big(V^{\textnormal{liq}}_T(\varphi^n)+e_T\big)\xrightarrow{\,\,L^1(\massP)\,\,}U\big(V^{\textnormal{liq}}_T(\varphi)+e_T\big) $$ 
    and 
    $$ \massP\left[\big(\varphi^{n,0}_t, \varphi^{n,1}_t\big)\rightarrow\big(\varphi^0_t, \varphi^1_t\big),\,\,\forall t\in[0,T]\right] = 1. $$
 \end{definition}
 
 \begin{proposition}  \label{finiteEntropyProp}
   In addition to the assumptions of Theorem \ref{maintheoremR}, we assume furthermore that, for some $\lambda'\in(0,\lambda)$, there exists a $\lambda'$-consistent price system $\big(\overline{Z}^0,\overline{Z}^1\big)\in\cZ^{\lambda'}_e(S)$, 
    such that 
      $$ \massE\big[V\big(\overline{y}\overline{Z}^0_T\big)\big]<\infty, $$
    for some $\overline{y}>0$.  Then the solution to the primal problem \eqref{PPP_Cu} is attainable, i.e., there exists a $\big(\widehat{\varphi}^0,\widehat{\varphi}^1\big)\in\cA^\lambda_U$ 
    such that $V^{\textnormal{liq}}_T\big(\widehat{\varphi}\big) = \widehat{g}(x)$, 
    and the dual optimizer $(\widehat{Z}^0,\widehat{Z}^1)$ belongs to $\cZ^{\lambda}_e(S)$, i.e.,  $(\widehat{Z}^0,\widehat{Z}^1)$ is a $\lambda$-consistent price system. 
 \end{proposition}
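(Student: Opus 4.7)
The plan splits into two halves: establishing strict positivity of the dual optimizer $\widehat{Z}$, and then constructing the attainable strategy $\widehat{\varphi}$ realizing $\widehat{g}(x)$.

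For strict positivity, the natural device is a perturbation argument along the line segment from $\widehat{Z}$ to $\overline{Z}$. Since $\lambda' < \lambda$, every $\lambda'$-consistent price system is in particular a $\lambda$-consistent price system (the interval $[(1-\lambda)S_t, S_t]$ contains $[(1-\lambda')S_t, S_t]$), so $\overline{Z} \in \cZ^\lambda_e(S)$. For $\epsilon\in(0,1]$, the convex combination $Z^\epsilon := (1-\epsilon)\widehat{Z} + \epsilon\overline{Z}$ is therefore a strictly positive element of $\cZ^\lambda_e(S)\subseteq \cZ^\lambda_a(S)$, admissible as a competitor in the dual problem \eqref{DDP}. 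Optimality of $\widehat{Z}$ gives
\begin{equation*}
    \massE\left[\frac{V(\widehat{y}Z^{\epsilon,0}_T) - V(\widehat{y}\widehat{Z}^0_T)}{\epsilon}\right] + \widehat{y}\,\massE\left[(\overline{Z}^0_T-\widehat{Z}^0_T)e_T\right] \geq 0.
\end{equation*}
By convexity of $V$ the difference quotient inside the expectation is decreasing as $\epsilon\downarrow 0$ and bounded above, uniformly in $\epsilon$, by $V(\widehat{y}\overline{Z}^0_T) - V(\widehat{y}\widehat{Z}^0_T)$, which is integrable thanks to the finite entropy hypothesis $\massE[V(\overline{y}\overline{Z}^0_T)] < \infty$ (after a comparison argument using reasonable asymptotic elasticity of $V$, exactly as in \cite[Corollary 4.2]{Sch01}, so that the passage from $\overline{y}$ to $\widehat{y}$ is harmless) together with $\massE[V(\widehat{y}\widehat{Z}^0_T)] = v(\widehat{y}) - \widehat{y}\massE[\widehat{Z}^0_T e_T] < \infty$. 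Monotone convergence then yields
\begin{equation*}
    \widehat{y}\,\massE\left[V'(\widehat{y}\widehat{Z}^0_T)(\overline{Z}^0_T - \widehat{Z}^0_T)\right] + \widehat{y}\,\massE\left[(\overline{Z}^0_T-\widehat{Z}^0_T)e_T\right] \geq 0.
\end{equation*}
If $A := \{\widehat{Z}^0_T = 0\}$ had positive probability, then on $A$ we would have $V'(\widehat{y}\widehat{Z}^0_T) = -\infty$ while $\overline{Z}^0_T - \widehat{Z}^0_T = \overline{Z}^0_T > 0$, making the first expectation equal to $-\infty$ — a contradiction. Hence $\widehat{Z}^0_T > 0$ a.s., and since $\widehat{Z}^0$ is a nonnegative martingale this strict positivity propagates to the whole path; the relation $\widehat{Z}^1_t = \widehat{Z}^0_t \widetilde{S}_t$ with $\widetilde{S}_t \geq (1-\lambda)S_t > 0$ then gives $\widehat{Z} \in \cZ^\lambda_e(S)$.

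For attainability, pick by definition of $\cC^\lambda_U$ a sequence $\varphi^n \in \cA^\lambda_{adm}(0)$ with $g_n := V^{\textnormal{liq}}_T(\varphi^n)$ satisfying $U(x+g_n+e_T) \to U(x+\widehat{g}(x)+e_T)$ in $L^1(\massP)$. I would apply a Komlós-type compactness result for predictable finite-variation processes in the spirit of \cite[Proposition 3.4]{CS15} and \cite[Theorem 3.4]{CS16b}: passing to forward convex combinations $\widetilde{\varphi}^n \in \conv(\varphi^n, \varphi^{n+1}, \ldots)$, the two nondecreasing components $\widetilde{\varphi}^{n,1,\uparrow}$ and $\widetilde{\varphi}^{n,1,\downarrow}$ converge pointwise at each $t \in [0,T]$ to nondecreasing processes, whose difference defines a predictable finite-variation $\widehat{\varphi}^1$, and similarly for $\widehat{\varphi}^0$. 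The $\lambda$-self-financing inequality \eqref{SF} passes to the limit because it is a pointwise cone condition preserved under convex combinations and under pointwise limits of monotone increments. Moreover, the strict monotonicity of $U$ and $L^1$-convergence of $U(x+g_n+e_T)$ force $g_n \to \widehat{g}(x)$ in probability, hence $V^{\textnormal{liq}}_T(\widetilde{\varphi}^n) \to \widehat{g}(x)$ in probability; combined with the pointwise convergence of $(\widetilde{\varphi}^{n,0}_T, \widetilde{\varphi}^{n,1}_T)$, this gives $V^{\textnormal{liq}}_T(\widehat{\varphi}) = \widehat{g}(x)$. By construction $\widehat{\varphi} \in \cA^\lambda_U$.

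The main obstacle is the interchange of limit and expectation in the perturbation step: once passed, the strict positivity of $\widehat{Z}^0_T$ is essentially immediate, but justifying it requires the two-sided integrability coming from the $\lambda'$-CPS with finite entropy, and this is precisely where the extra hypothesis of the proposition enters. A secondary technical point is that the Komlós-type compactness argument for $\lambda$-self-financing strategies has to be carried out on càdlàg (indeed làdlàg) processes of finite variation without a uniform admissibility bound in the limit — which is exactly why the weaker notion $\cA^\lambda_U$, rather than $\cA^\lambda_{adm}$, appears in the statement.
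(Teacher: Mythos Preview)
Your strict positivity argument via perturbation is correct and is a genuinely different route from the paper's. The paper first proves attainability and then reads off strict positivity from the primal--dual relation $\widehat{y}\widehat{Z}^0_T = U'\bigl(x+\widehat{g}(x)+e_T\bigr)$, which is $>0$ by the Inada condition once $\widehat{g}(x)<\infty$ almost surely (a consequence of $\widehat{g}(x)=V^{\textnormal{liq}}_T(\widehat{\varphi})$ with $\widehat{\varphi}$ of finite variation). Your variational argument is more direct and has the virtue of not depending on attainability; note however that it only uses that $\overline{Z}$ is a $\lambda$-consistent price system with finite $V$-expectation, and nowhere exploits the strict inequality $\lambda'<\lambda$.

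That strict inequality is exactly what is missing in your attainability argument, and this is a genuine gap. The Koml\'os-type result you invoke (e.g.\ \cite[Proposition~3.4]{CS06}) requires that $\conv\{\Var_T(\varphi^{n,1});\,n\in\NN\}$ be bounded in $L^0(\massP)$; it does \emph{not} follow for free from $U(x+g_n+e_T)\to U(x+\widehat{g}(x)+e_T)$ in $L^1$. In the paper this bound is obtained as follows: writing $\overline{S}=\overline{Z}^1/\overline{Z}^0$ and $A^n_t:=(\lambda-\lambda')\int_0^t S_u\,d\varphi^{n,1,\downarrow}_u$, one shows via the self-financing condition that $\overline{Z}^0(x+\varphi^{n,0}+\varphi^{n,1}\overline{S}+A^n)$ is a true supermartingale, whence $\massE\bigl[\overline{Z}^0_T\bigl(x+V^{\textnormal{liq}}_T(\varphi^n)+A^n_T+e_T\bigr)\bigr]\leq x+\rho$. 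Fenchel's inequality together with the $L^1$-convergence of the utilities and the finite-entropy hypothesis gives uniform integrability of the negative parts, so the sequence is $L^1$-bounded. Since $V^{\textnormal{liq}}_T(\varphi^n)\to\widehat{g}(x)$ in probability and $\overline{Z}^0_T>0$, it follows that $\conv\{A^n_T\}$ is $L^0$-bounded; using $\inf_{0\leq u\leq T}S_u>0$ (a consequence of $\overline{S}$ being a nonnegative supermartingale under $\overline{\massQ}$), the positive coefficient $(\lambda-\lambda')$ in $A^n$ translates this into the required $L^0$-bound on total variations. Without this step your compactness argument does not get off the ground.

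So your identification of ``where the extra hypothesis enters'' is reversed: the finite entropy of a $\lambda$-CPS suffices for your perturbation argument, but the strict gap $\lambda'<\lambda$ is indispensable for bounding the trading activity and hence for attainability.
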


 \begin{proof}  
    By Theorem \ref{maintheoremR}, there exists a sequence $\big((\varphi^{n,0},\varphi^{n,1})\big)_{n\in\NN}\subseteq\cA^\lambda_{adm}$ such that 
     \begin{equation}  \label{U(x+Vn+e)toU(x+g+e)}
       U\big(x+V^{\textnormal{liq}}_T(\varphi^n)+e_T\big)\xrightarrow{\,\,L^1(\massP)\,\,}U\big(x+\widehat{g}(x)+e_T\big). 
     \end{equation}
    Then, for $\overline{S}:=\frac{\overline{Z}^1}{\overline{Z}^0}$, the process $\big(\overline{Z}^0_t(x+\varphi_t^{n,0}+\varphi_t^{n,1}\overline{S}_t + A_t^n)\big)_{0\leq t\leq T}$ is a supermartingale for each $n\in\NN$, 
      where $$ A_t^n := (\lambda-\lambda')\int_0^tS_ud\varphi_u^{n,1,\downarrow}. $$
    Indeed, by integration by parts we may write 
      $$ x+\varphi_t^{n,0}+\varphi_t^{n,1}\overline{S}_t + A_t^n = x+\varphi_t^{n,0}+ \int_0^t\varphi_u^{n,1}d\overline{S}_u +\int_0^t\overline{S}_ud\varphi_u^{n,1}+ A_t^n. $$
    Since $(1-\lambda)S_u\leq \overline{S}_u\leq S_u$ and by the $\lambda$-self-financing condition \eqref{SF}, we obtain 
     \begin{equation*}
       \begin{aligned}
         \varphi_t^{n,0} &-\varphi_s^{n,0} +\int_s^t\overline{S}_ud\varphi_u^{n,1}+ A_t^n-A_s^n \\
           & \leq  -\int_s^tS_ud\varphi_u^{n,1,\uparrow} + \int_s^t(1-\lambda)S_ud\varphi_u^{n,1,\downarrow} +\int_s^t\overline{S}_ud\varphi_u^{n,1}+ \int_s^t(\lambda-\lambda')S_ud\varphi_u^{n,1,\downarrow} \\
           & = -\int_s^t\big(S_u-\overline{S}_u\big)d\varphi_u^{n,1,\uparrow} - \int_s^t\big(\overline{S}_u-(1-\lambda')S_u\big)d\varphi_u^{n,1,\downarrow} \leq 0, \\
       \end{aligned}
     \end{equation*}
      for all $0\leq s<t\leq T$, therefore the process $(B^n_t)_{0\leq t\leq T}:= \left(\varphi_t^{n,0}+ \int_0^t\overline{S}_ud\varphi_u^{n,1}+ A_t^n\right)_{0\leq t\leq T}$ is nonincreasing. 
    It follows by Bayes' rule that $\overline{S}$ is a local martingale under the measure $\overline{\massQ}\sim\massP$ defined by $\frac{d\overline{\massQ}}{d\massP}:=\overline{Z}^0_T$. 
    As $\varphi^{n,1}$ is of finite variation and hence locally bounded, the stochastic integral $\varphi^{n,1}\sint\overline{S}$ is a local martingale under $\overline{\massQ}$. 
    Therefore, $x+\big(\varphi^{n,1}\sint\overline{S}\big)_t + B_t^n$ is a local supermartingale under $\overline{\massQ}$. 
    Using Bayes' rule once again, we obtain that 
       $$\big(\overline{Z}^0_t\big(x + \varphi_t^{n,0}+\varphi_t^{n,1}\overline{S}_t + A_t^n \big)\big)_{0\leq t\leq T} = \big(\overline{Z}^0_t\big(x + \big(\varphi^{n,1}\sint\overline{S}\big)_t+B_t^n\big)\big)_{0\leq t\leq T}$$ 
     is a local supermartingale under $\massP$. 
    Since $(\varphi^{n,0},\varphi^{n,1})\in\cA^\lambda_{adm}$, we have 
       $$ \overline{Z}^0_t\big(x + \varphi_t^{n,0}+\varphi_t^{n,1}\overline{S}_t + A_t^n \big) \geq \overline{Z}^0_tV^{\textnormal{liq}}_t(\varphi^n) \geq - M^n\overline{Z}^0_t, $$
      for some $M^n\geq 0$. 
    As $\overline{Z}^0$ is a true martingale, the process $\big(\overline{Z}^0_t\big(x + \varphi_t^{n,0}+\varphi_t^{n,1}\overline{S}_t + A_t^n \big)\big)_{0\leq t\leq T}$ is a true supermartingale under $\massP$, 
      which implies in particular that 
       $$ \massE\big[\overline{Z}^0_T\big(x+\varphi_T^{n,0}+A_T^n\big)\big] \leq x, $$
      and 
       \begin{equation}  \label{E[Z(x+phi+e+A)]leqx}
         \massE\big[\overline{Z}^0_T\big(x+\varphi_T^{n,0}+A_T^n+e_T\big)\big] \leq x + \rho,
       \end{equation}
       for all $n\in\NN$. \\

    By Fenchel's inequality and the monotonicity of $U$ we can estimate 
     \begin{equation*}
       \overline{y}\overline{Z}^0_T\big(x+V^{\textnormal{liq}}_T(\varphi^n)+A_T^n+e_T\big) \geq U\big(x+V^{\textnormal{liq}}_T(\varphi^n)+e_T\big)-V\big(\overline{y}\overline{Z}^0_T\big). 
     \end{equation*}
    By the assumption we have that $V\big(\overline{y}\overline{Z}^0_T\big)\in L^1(\massP)$, and it follows by \eqref{U(x+Vn+e)toU(x+g+e)} that
%         $$ U\big(x+V^{\textnormal{liq}}_T(\varphi^n)+e_T\big) \xrightarrow{\,\,L^1(\massP)\,\,}U\big(x+\widehat{g}(x)+e_T\big), $$
%       we obtain that 
      $$\left(\overline{y}\overline{Z}^0_T\big(x+V^{\textnormal{liq}}_T(\varphi^n)+A_T^n+e_T\big)^-\right)_{n\in\NN} \mbox{ is uniformly integrable.}$$  
    Together with \eqref{E[Z(x+phi+e+A)]leqx} we obtain the sequence $\left(\overline{y}\overline{Z}^0_T\big(x+V^{\textnormal{liq}}_T(\varphi^n)+A_T^n+e_T\big)\right)_{n\in\NN}$ is $L^1(\massP)$-bounded.  \\
      
    It follows from $\overline{Z}^0_T >0$ and $V^{\textnormal{liq}}_T(\varphi^n)\xrightarrow{\,\massP\,}\widehat{g}(x)$, that $\conv\{A_T^n;\,n\in\NN\}$ is bounded in $L^0(\massP)$.
    Since $\overline{S}$ is a nonnegative local martingale under $\overline{\massQ}$, hence also a nonnegative supermartingale under $\overline{\massQ}$, we see that
      $$ \inf_{0\leq u\leq T}S_u \geq \inf_{0\leq u\leq T}\overline{S}_u  > 0 $$
     by \cite[Theorem VI-17]{DM82}. 
    This implies that $\conv\{\Var_T(\varphi^{n,1});\,n\in\NN\}$ is bounded in $L^0(\massP)$, therefore the same for $\conv\{\Var_T(\varphi^{n,0});\,n\in\NN\}$. 
    By \cite[Proposition 3.4]{CS06} there exists a sequence 
      $$ \big(\widetilde{\varphi}^{n,0},\widetilde{\varphi}^{n,1}\big)\in\conv\big\{\big(\varphi^{k,0},\varphi^{k,1}\big);\, k\geq n\big\} $$
      of convex combinations and a predictable process $\big(\widehat{\varphi}^0,\widehat{\varphi}^1\big)$ of finite variation such that 
        $$ \massP\left[\big(\widetilde{\varphi}^{n,0}_t, \widetilde{\varphi}^{n,1}_t\big)\rightarrow\big(\widehat{\varphi}^0_t, \widehat{\varphi}^1_t\big),\,\,\forall t\in[0,T]\right] = 1. $$
    It implies that $(\widehat{\varphi}^0,\widehat{\varphi}^1)$ is a $\lambda$-self-financing trading strategy such that $V^{\textnormal{liq}}_T\big(\widehat{\varphi}\big) = \widehat{g}(x)$, 
      therefore $(\widehat{\varphi}^0,\widehat{\varphi}^1)\in\cA^\lambda_U$. \\
     
    Since $\widehat{g}(x)=V^{\textnormal{liq}}_T(\widehat{\varphi})<\infty$, we have that 
      $$ \widehat{y}\widehat{Z}^0_T(\widehat{y}) = U'\big(x+\widehat{g}(x)+e_T\big)>0 $$
      by the Inada condition. 
    Hence, $(\widehat{Z}^0,\widehat{Z}^0)\in\cZ^{\lambda}_e(S)$. 
  \end{proof}

\section{Shadow price}
 For utility maximization problems with proportional transaction costs, it has been observed that the original market with transaction costs can sometimes be replaced by a frictionless shadow market, 
   that yields the same optimal strategy and utility.  
 In this section, we shall study the existence of such a fictitious market for the utility maximization considered in the previous section. 
 In particular, we first construct shadow prices in the classical sense via the dual optimizer of the problem \eqref{DDP}. Then, we introduce a generalized definition of shadow price processes and show that this kind of process always exists 
 % defined process exists the and then show the existence of generalized shadow prices, 
   whenever the utility maximization problem can be solved via the duality approach. \\
   
 % \vspace{2mm}
 
 First, we adapt the definition of shadow price processes \cite[Definition 2.2]{CS16b} in the classical sense to our setting with random endowment. 
 
 \begin{definition}  \label{classicalshadowprice}
   A semimartingale $\widetilde{S}=\big(\widetilde{S}_t\big)_{0\leq t\leq T}$ is called a shadow price, if 
    \begin{enumerate}[$(i)$]
     \item $\widetilde{S}$ takes values in the bid-ask spread $[(1-\lambda)S,S]$;
     \item The solution $\widehat{H}=\big(\widehat{H}_t\big)_{0\leq t\leq T}$ to the frictionless utility maximization problem 
            \begin{equation}  \label{frictionlessPPAu}
              u\big(x;\widetilde{S}\big) := \sup_{H\in\cA_U\big(\widetilde{S}\big)}\massE\big[U\big(x+\big(H\sint\widetilde{S}\big)_T+e_T\big)\big],
             % \massE\big[U\big(x+\big(H\sint\widetilde{S}\big)_T+e_T\big)\big] \to \max!, \qquad H\in\cA_U\big(\widetilde{S}\big), 
            \end{equation}
            exists in the sense of \cite{Owe02}, where $\cA_U\big(\widetilde{S}\big)$ denotes the set of all $\widetilde{S}$-integrable predictable processes $H$, 
            such that there exists a sequence $(H^n)_{n\in\NN}$ of admissible self-financing trading strategies without transaction costs such that $U\big(x+(H^n\sint\widetilde{S})_T+e_T\big)\in L^1(\massP)$ and 
              $$ U\big(x+(H^n\sint\widetilde{S})_T+e_T\big) \xrightarrow{\,\,L^1(\massP)\,\,}U\big(x+(H\sint\widetilde{S})_T+e_T\big);$$
     \item The optimal trading strategy $\widehat{H}$ to the frictionless problem \eqref{frictionlessPPAu} coincides with the holdings in stocks $\widehat{\varphi}^1$ to the utility maximization problem \eqref{PPP_Cu} 
             under transaction costs such that $\big(\widehat{H}\sint\widetilde{S}\big)_T = \widehat{g}(x)= V^{\textnormal{liq}}_T(\widehat{\varphi})$.
    \end{enumerate}
 \end{definition}

 Generally speaking, a classical shadow price $\widetilde{S}$ (if exists) allows us to obtain the optimal trading strategy for the utility maximization problem \eqref{PPP_Cu} 
   by solving the frictionless utility maximization problem \eqref{classicalshadowprice}. 
 Obviously, trading for $\widetilde{S}$ yields higher expected utility than trading under transaction costs. 
 Thus, the shadow price is a least favorable frictionless market lying in the bid-ask spread. 
 It also follows that the optimal strategy $\widehat{H}$ to the problem \eqref{frictionlessPPAu} in the shadow market is of finite variation, due to the coincidence of $\widehat H$ and $\widehat \varphi^1$. 
 Furthermore, both of them only trade if $\widehat{S}$ is at the bid or ask price, i.e., 
     $$\big\{d\widehat{\varphi}^1>0\big\}\subseteq \big\{\widehat{S}=S\big\} \quad\mbox{ and }\quad \big\{d\widehat{\varphi}^1<0\big\}\subseteq \big\{\widehat{S}=(1-\lambda)S\big\}.$$
     We refer the readers to \cite{CS15} for details of the above notion. \\
     
  %\vspace{2mm}
    
 The following proposition provides the sufficient condition for the existence of a shadow price, in particular, this is guaranteed by the existence of a $\lambda'$-consistent price system with finite $V$-expectation, where $0<\lambda'<\lambda$.
 
 \begin{proposition}\label{excla}
   Under the assumptions of Theorem \ref{maintheoremR}, suppose that the solution $\widehat{g}(x)$ to the primal problem \eqref{PPP_Cu} is attainable and 
     the solution $\widehat Z:= \big(\widehat{Z}^0,\widehat{Z}^1\big)$ to the dual problem \eqref{DDP} belongs to $\cZ^{\lambda}_e(S)$, i.e., $\widehat Z$ is a $\lambda$-consistent price system. 
   Then, the process defined by $\widehat{S}:=\frac{\widehat{Z}^1}{\widehat{Z}^0}$ is a shadow price to problem \ref{PPP_Cu} in the sense of Definition \ref{classicalshadowprice}. 
 \end{proposition}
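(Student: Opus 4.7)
My plan is to verify the three conditions of Definition \ref{classicalshadowprice} by combining a dominance inequality between the two markets with the dual upper bound provided by $\widehat{Z}^0$, following the template of \cite{CS16b} adapted to the presence of random endowment. Condition $(i)$ is immediate from the hypothesis $\widehat{Z}\in\cZ^\lambda_e(S)$ and \eqref{J10}, which force $\widehat{S}_t=\widehat{Z}^1_t/\widehat{Z}^0_t\in[(1-\lambda)S_t,S_t]$.

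The core of the argument is the equality $u(x)=u(x;\widehat{S})$, which I will derive from two opposite inequalities. For the lower bound $u(x)\le u(x;\widehat{S})$, I fix any $(\varphi^0,\varphi^1)\in\cA^\lambda_{adm}(x)$; since $(1-\lambda)S\le\widehat{S}\le S$, the $\lambda$-self-financing condition \eqref{SF} upgrades to $d\varphi^0\le-\widehat{S}\,d\varphi^1$. Integration by parts for the predictable finite-variation process $\varphi^1$ against the semimartingale $\widehat{S}$ then yields $\varphi^0_T+\varphi^1_T\widehat{S}_T\le x+(\varphi^1\sint\widehat{S})_T$, and combining with the elementary bound $V^{\mathrm{liq}}_T(\varphi)\le\varphi^0_T+\varphi^1_T\widehat{S}_T$ (from splitting $\varphi^1_T$ into positive and negative parts and using $\widehat{S}_T\in[(1-\lambda)S_T,S_T]$) gives $V^{\mathrm{liq}}_T(\varphi)\le x+(\varphi^1\sint\widehat{S})_T$. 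After verifying that $\varphi^1$ is admissible in the frictionless sense---via the $\widehat{Z}^0$-supermartingale argument for $\widehat{Z}^0_t(x+\varphi^0_t+\varphi^1_t\widehat{S}_t)$ used in the proof of Proposition \ref{finiteEntropyProp}---passing to the $L^1(\massP)$-approximating sequences in the definitions of $\cC^\lambda_U$ and $\cA_U(\widehat{S})$ delivers $u(x)\le u(x;\widehat{S})$.

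For the reverse inequality I will exploit that $\widehat{\massQ}:=\widehat{Z}^0_T\cdot\massP$ is an equivalent local martingale measure for $\widehat{S}$, since $\widehat{Z}^0\widehat{S}=\widehat{Z}^1$ is a local martingale while $\widehat{Z}^0$ is a strictly positive true martingale. Fenchel's inequality combined with the $\widehat{\massQ}$-supermartingale property of $x+(H\sint\widehat{S})$ for $H\in\cA_U(\widehat{S})$ gives
\begin{equation*}
\massE\bigl[U(x+(H\sint\widehat{S})_T+e_T)\bigr]\le \massE\bigl[V(\widehat{y}\widehat{Z}^0_T)\bigr]+\widehat{y}\massE[\widehat{Z}^0_T e_T]+x\widehat{y}=v(\widehat{y})+x\widehat{y}=u(x),
\end{equation*}
where the equalities use that $\widehat{Z}^0$ is the dual minimizer and the conjugacy of Theorem \ref{maintheoremR}. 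Taking the supremum over $H$ yields $u(x;\widehat{S})\le u(x)$.

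To conclude, I will apply the dominance inequality to the attainable optimizer $\widehat{\varphi}$, producing $\widehat{g}(x)=V^{\mathrm{liq}}_T(\widehat{\varphi})\le x+(\widehat{\varphi}^1\sint\widehat{S})_T$. Together with the dual bound at $H=\widehat{\varphi}^1$ and the just-established equality of optima, this forces $\massE[U(x+(\widehat{\varphi}^1\sint\widehat{S})_T+e_T)]=\massE[U(x+\widehat{g}(x)+e_T)]$, so the strict monotonicity of $U$ gives $(\widehat{\varphi}^1\sint\widehat{S})_T=\widehat{g}(x)$ almost surely, which is $(iii)$; existence and uniqueness of the frictionless optimizer in the sense of \cite{Owe02} then identify $\widehat{H}=\widehat{\varphi}^1$, giving $(ii)$. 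The main obstacle I expect is the careful handling of frictionless admissibility for $\widehat{\varphi}^1$ and its approximating sequence across the two market specifications, so that both the dominance inequality and the $\widehat{Z}^0$-supermartingale bound pass to the $L^1(\massP)$-limit of the utility values; this is precisely where the auxiliary process $A^n$ and the argument of Proposition \ref{finiteEntropyProp} become indispensable.
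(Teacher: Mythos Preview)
Your sandwich argument is close to the paper's line, but step $(iii)$ has a genuine circularity that you have not closed. You obtain the dominance inequality $\widehat{g}(x)=V^{\mathrm{liq}}_T(\widehat{\varphi})\le(\widehat{\varphi}^1\sint\widehat{S})_T$ directly from the $\lambda$-self-financing property of the limit strategy, and then want to squeeze via $\massE\big[U\big(x+(\widehat{\varphi}^1\sint\widehat{S})_T+e_T\big)\big]\le u(x;\widehat{S})=u(x)$. The problem is the left-hand inequality: the Fenchel/supermartingale bound applies to admissible $H$, or by approximation to $H\in\cA_U(\widehat{S})$, but $\widehat{\varphi}^1$ is neither a priori. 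Writing $(\widehat{\varphi}^1\sint\widehat{S})_T=\widehat{\varphi}^0_T+A_T$ with $A_T\ge 0$ (integration by parts plus \eqref{SF}), the Fenchel estimate yields
\[
\massE\big[U\big(x+(\widehat{\varphi}^1\sint\widehat{S})_T+e_T\big)\big]\le u(x)+\widehat{y}\,\massE\big[\widehat{Z}^0_T\widehat{\varphi}^0_T\big]+\widehat{y}\,\massE\big[\widehat{Z}^0_TA_T\big],
\]
and the supermartingale limit argument from Proposition~\ref{finiteEntropyProp} only gives $\massE[\widehat{Z}^0_T\widehat{\varphi}^0_T]\le 0$, which does not control the nonnegative term $\massE[\widehat{Z}^0_TA_T]$. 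So your sandwich never closes.

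What the paper supplies --- and what you are missing --- is the exact identity $\massE\big[\widehat{Z}^0_T\widehat{g}(x)\big]=0$ coming from Theorem~\ref{maintheoremR}(4) (equivalently $xu'(x)=\massE\big[(x+\widehat{g}(x))U'(x+\widehat{g}(x)+e_T)\big]$, or the formula for $v'(\widehat{y})$). This upgrades the supermartingale $\widehat{Z}^0\big(x+\widehat{\varphi}^0+\widehat{\varphi}^1\widehat{S}\big)$ to a true martingale, and since $\widehat{Z}^0\big(x+(\widehat{\varphi}^1\sint\widehat{S})\big)$ is a local martingale one concludes $A\equiv 0$ directly, hence $(\widehat{\varphi}^1\sint\widehat{S})_T=\widehat{g}(x)$. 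Only then does Owen's characterization identify $\widehat{\varphi}^1$ as the frictionless optimizer (the paper also uses that $\widehat{\varphi}^1\sint\widehat{S}$ is a true $\widehat{\massQ}$-martingale, not merely a local one, which again needs the equality rather than inequality). Your closing paragraph correctly flags the admissibility passage as the crux, but points to Proposition~\ref{finiteEntropyProp}, which only delivers the supermartingale bound; the decisive extra input is Theorem~\ref{maintheoremR}(4).
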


 \begin{proof}
   By the assumption that $\widehat \varphi$ is attainable, we know that  there exists a sequence of admissible $\lambda$-self-financing trading strategies $\big(\varphi^{n,0},\varphi^{n,1}\big)_{n\in\NN}$ satisfying
     \begin{equation}\label{convsen}
       \massP\left[\big(\varphi^{n,0}_t,\varphi^{n,1}_t\big)\longrightarrow\big(\widehat{\varphi}^0_t,\widehat{\varphi}^1_t\big),\,\forall t\in[0,T]\right] = 1,  
     \end{equation}
     and 
     \begin{equation} \label{U(x+phin+e)L1toU(x+phi+e)}
       U\big(x+\varphi^{n,0}_T+e_T\big) \xrightarrow{\,\,L^1(\massP)\,\,} U\big(x+\widehat{g}(x)+e_T\big). 
     \end{equation}
   By Frechel's conjugate, and the fact that $\massE[\widehat{Z}^0_T\big(x+\widehat{g}(x)+e_T\big)]<x+\rho$, we could show in a similar way as Proposition \ref{finiteEntropyProp} that 
        $$\left(\widehat{y}\widehat{Z}^0_T\left(x+\varphi^{n,0}_T +e_T\right)^-\right)_{n\in \mathbb{N}}$$ 
      is uniformly integrable. 
   Due to the boundedness of $e_T$, we have $\left(\widehat{y}\widehat{Z}^0_T\left(x+\varphi^{n,0}_T\right)^-\right)_{n\in \mathbb{N}}$ is also uniformly integrable.
   Then, the proof goes in a same way as the one for \cite[Proposition 3.3]{CS16b}. 
   In particular, we have $\big(\widehat{Z}^0_t(x+\widehat{\varphi}^{0}_t)+\widehat{Z}_t^1\widehat{\varphi}_t^{1}\big)_{0\leq t\leq T}$ is a supermartingale under $\massP$, 
      since it is the limit of a sequence of supermartingles in the sense \eqref{convsen}. 
   Thanks to Theorem \ref{maintheoremR} (4), we have moreover
       $$ x = \massE\left[\widehat{Z}^0_T\left(x+\widehat{\varphi}^0_T\right)\right], $$
      from which we could conclude that $\big(\widehat{Z}^0_t(x+\widehat{\varphi}^{0}_t)+\widehat{Z}_t^1\widehat{\varphi}_t^{1}\big)_{0\leq t\leq T}$ is a martingale under $\massP$. \\
    
    %\vspace{2mm}
   
   We apply integration by parts to obtain that
     \begin{equation*}
       \begin{aligned}
         \widehat{Z}^0_t\left(x+\widehat{\varphi}^{0}_t\right) &+\widehat{Z}_t^1\widehat{\varphi}_t^{1} = \widehat{Z}^0_t\left(x+\widehat{\varphi}^{0}_t+\widehat{\varphi}_t^{1}\widehat{S}_t\right) \\
           &= \widehat{Z}^0_t\left(x + \big(\widehat{\varphi}^1\sint\widehat{S}\big)_t - \int_0^t\big(S_u-\widehat{S}_u\big)d\widehat{\varphi}^{1,\uparrow}_u - 
                      \int_0^t\big(\widehat{S}_u-(1-\lambda)S_u\big)d\widehat{\varphi}^{1,\downarrow}_u\right) \\
           &=: \widehat{Z}^0_t\left(x+\big(\widehat{\varphi}^1\sint\widehat{S}\big)_t-A_t\right).
       \end{aligned}
     \end{equation*}
   Again as in the proof of \cite[Proposition 3.3]{CS16b}, one could prove that $\widehat{Z}^0\big(x+\widehat{\varphi}^1\sint\widehat{S}\big)$ is a local martingale 
     and $A$ is increasing, which implies that $A\equiv 0$ and furthermore 
       $$\big\{d\widehat{\varphi}^1>0\big\}\subseteq \big\{\widehat{S}=S\big\} \quad\mbox{ and }\quad \big\{d\widehat{\varphi}^1<0\big\}\subseteq \big\{\widehat{S}=(1-\lambda)S\big\}.$$
    
   It is clear that, 
      $$ u\big(x;\widehat{S}\big) \leq \massE\left[V(yZ_T)+yZ_Te_T\right]+xy, $$
     for $y>0$ and $Z_T\in\cZ_a\big(\widehat{S}\big)$.    
   As $\big(\widehat{Z}^0,\widehat{Z}^1\big)\in\cZ^\lambda_e(S)$, we obtain that $\widehat{Z}^0$ is the density process of an equivalent local martingale measure for the frictionless process $\widehat{S}$, 
     therefore 
      $$ \massE\left[V\big(\widehat{y}\widehat{Z}^0_T\big)+\widehat{y}\widehat{Z}^0_Te_T\right] + x\widehat{y} = u(x)\leq u\big(x;\widehat{S}\big)
              \leq \massE\left[V\big(\widehat{y}\widehat{Z}^0_T\big)+\widehat{y}\widehat{Z}^0_Te_T\right] + x\widehat{y}. $$
   It follows from the frictionless duality theorem \cite[Theorem 1.1]{Owe02} that $\widehat{y}$ and $\widehat{Z}^0$ is the optimizer of the frictionless dual problem and moreover, 
       $x+(\widehat{\varphi}^1\sint\widehat{S})_T+e_T=x+\widehat{g}(x)+e_T=-V'\big(\widehat{y}\widehat{Z}^0_T\big)$ 
     is the optimal terminal wealth to the frictionless problem (\ref{frictionlessPPAu}) under $\widehat{S}$. 
   Since $\widehat{\varphi}\sint\widehat{S}$ is a martingale under $\widehat{\massQ}$ given by $\frac{d\widehat{\massQ}}{d\massP}:=\widehat{Z}^0_T$, 
     we obtain that $\widehat{\varphi}^1$ has to be the optimal strategy and in $\cA_U\big(x;\widehat{S}\big)$ by \cite[Theorem 1.1.(v)]{Owe02}. \\
     
    %\vspace{2mm}
              
   In conclusion, the price process $\widehat{S}$ is a shadow price in the sense of Definition \ref{classicalshadowprice} for the utility maximization problem \eqref{PPP_Cu} under transaction costs. 
 \end{proof}
 
%\textcolor{red}{\begin{remark}
% No separation (No Theorem \ref{maintheoremR} (4)) for the utility function supporting the positive half line. 
% Difficult to write remark here, because we do not mention the supermartingale deflator. 
%\end{remark}}

 As explained in \cite{CS16b}, that $U(\infty)=\infty$ is a sufficient condition for the existence of classical shadow price by Proposition \ref{excla}, 
   which implies $\widehat{Z}\in\mathcal{Z}^\lambda_e(S)$ and thus the result of Proposition \ref{finiteEntropyProp} holds. 
 When $U(\infty)<\infty$, we observe that the solution to the primal problem \eqref{PPP_Cu} is not necessarily attainable, 
   i.e., there may not exist an optimal $\lambda$-self-financing trading strategy $(\widehat{\varphi}^0,\widehat{\varphi}^1)$, 
   such that $\widehat{\varphi}^0_T=\widehat{g}(x)$. 
 However, the solution $(\widehat{Z}^0,\widehat{Z}^1)$ to the dual problem is always a local martingale (an absolutely continuous consistent price system). 
 We may define the following generalized shadow price, which only leads to the same optimal utility as the one under transaction costs. 

 \begin{definition}
   We keep all settings of Theorem \ref{maintheoremR}.  A semimartingale $\widetilde{S}=(\widetilde{S}_t)_{0\leq t\leq T}$ is called generalized shadow price for the optimization problem \eqref{PPP}, 
    if 
    \begin{enumerate}[$(i)$]
     \item The process $\widetilde{S}$ takes values in  $[(1-\lambda)S,S]$.
     \item The solution $\widetilde{g}\in\cC_U(\widetilde{S})$ to the corresponding frictionless utility maximization problem 
            \begin{equation} \label{Pshadowproblem}
              u(x;\widetilde{S}):= \sup_{g\in\cC_U(\widetilde{S})}\massE[U(x+g+e_T)]
            \end{equation}
            exists and coincides with the optimal solution $\widehat{g}\in\cC^{\lambda}_U$ to \eqref{PPP} under transaction costs, 
            where         
           \begin{eqnarray*}
             \cC_{U}(\widetilde{S}):= \left\{g\in L^0(\massP; \RR\cup\{\infty\})\, \Bigg| \,  \begin{array}{r}
                                                                                 \exists g_n\in\cC(\widetilde{S})\,\mbox{ s.t. }\, U(x+g_n+e_T)\in L^1(\massP) \mbox{ and } \\
                                                                                 U(x+g_n+e_T) \xrightarrow{L^1(\massP)} U(x+g+e_T)
                                                                              \end{array}
                        \right\}, 
           \end{eqnarray*}
           and 
             $$ \cC(\widetilde{S}):=\{g\in L^0\, | \,g\leq (H\sint S)_T\,\mbox{ for some admissible portfolio}\, H\}.$$
    \end{enumerate}
 \end{definition}

%\textcolor{red}
\begin{remark}
   In the duality theorem of the utility maximization problem with utility functions defined on the positive real line, 
     the existence of an optimal trading strategy $\widehat{\varphi}\in\cA^\lambda_{adm}$ follows directly from the existence of the dual optimizer $\widehat{g}\in\cC^\lambda$. 
   Therefore, it is quite natural to require in the classical definition of shadow price that the optimal trading strategy in the frictionless shadow market is also an optimal one in the original market with transaction costs. 
   For the problem with utility functions defined on the whole real line, we have in general no chance to find the optimal strategy. 
   This is our motivation to define the generalized shadow price in such a way. 
 \end{remark}
 
 \begin{theorem}
   The semimartingale $\widehat{S}$ defined by \eqref{J10} associated with the solution $(\widehat{Z}^0, \widehat{Z}^1)\in \cZ_a^{\lambda}(S)$ of the dual problem \eqref{DDP} is a generalized shadow price by the definition above.
 \end{theorem}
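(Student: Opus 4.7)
The plan is to leverage the fact that $\widehat S=\widehat Z^1/\widehat Z^0$ lies in the bid-ask spread and that $\widehat Z^0$ simultaneously serves as a martingale density on the frictionless market driven by $\widehat S$. Property (i) is immediate from \eqref{J10}: by the very definition of $\cZ^\lambda_a(S)$ one has $\widehat S_t\in[(1-\lambda)S_t,S_t]$ for all $t\in[0,T]$ almost surely.

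For (ii) I would first establish $\cC^\lambda_U\subseteq\cC_U(\widehat S)$, which yields $u(x)\leq u(x;\widehat S)$. Fix $\varphi=(\varphi^0,\varphi^1)\in\cA^\lambda_{adm}(0)$ and apply integration by parts in the spirit of the proof of Proposition \ref{finiteEntropyProp}: using the self-financing inequality \eqref{SF} together with $(1-\lambda)S\leq\widehat S\leq S$, one checks that
\begin{equation*}
\varphi^0_t+\varphi^1_t\widehat S_t\leq\bigl(\varphi^1\sint\widehat S\bigr)_t,\qquad 0\leq t\leq T.
\end{equation*}
Combining this with the elementary bound $\varphi^1_t\widehat S_t\geq(\varphi^1_t)^+(1-\lambda)S_t-(\varphi^1_t)^-S_t$ gives $V^{\mathrm{liq}}_t(\varphi)\leq(\varphi^1\sint\widehat S)_t$ for every $t$. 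Hence $\varphi^1$ is an admissible predictable integrand against $\widehat S$ in the frictionless sense and each element of $\cC^\lambda$ is dominated by an element of $\cC(\widehat S)$; passing to the approximating sequences in the definition of $\cC^\lambda_U$ yields the desired inclusion.

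For the reverse inequality I would invoke the frictionless duality theorem of Owen \cite{Owe02} on $\widehat S$. Since $\widehat Z^0\widehat S=\widehat Z^1$ is a local martingale and $\widehat Z^0$ is a true martingale, $\widehat Z^0_T$ is a feasible candidate in the frictionless dual problem associated with $\widehat S$ and therefore $v(y;\widehat S)\leq v(y)$ for every $y>0$. Conjugate duality as in Theorem \ref{maintheoremR}(1), together with the analogous property on the frictionless side from \cite{Owe02}, then gives
\begin{equation*}
u(x;\widehat S)\leq\inf_{y>0}\{v(y;\widehat S)+xy\}\leq\inf_{y>0}\{v(y)+xy\}=u(x),
\end{equation*}
so $u(x)=u(x;\widehat S)$ and $\widehat Z^0_T$ is in fact the minimizer of the frictionless dual problem on $\widehat S$.

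Finally, the identification of the primal optimizer follows from the first-order condition in Owen's theorem: the unique $\widetilde g\in\cC_U(\widehat S)$ solving \eqref{Pshadowproblem} is characterized by $x+\widetilde g+e_T=I(\widehat y\widehat Z^0_T)$ with $\widehat y=u'(x;\widehat S)=u'(x)$. Since $x+\widehat g(x)+e_T=I(\widehat y\widehat Z^0_T(\widehat y))$ by Theorem \ref{maintheoremR}(3), one concludes $\widetilde g=\widehat g(x)$. The main technical hurdle I anticipate is the careful pathwise estimate $V^{\mathrm{liq}}_t(\varphi)\leq(\varphi^1\sint\widehat S)_t$, which requires a uniform bookkeeping of the continuous part and the left and right jumps of the l\`adl\`ag strategy $\varphi$ against the c\`adl\`ag semimartingale $\widehat S$; once this pointwise estimate is in place, the remainder of the argument is a clean comparison through the already established duality results.
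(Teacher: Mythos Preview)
Your overall architecture is the right one, and the primal inclusion $\cC^\lambda\subseteq\cC(\widehat S)$ via the pathwise estimate $V^{\mathrm{liq}}_t(\varphi)\le(\varphi^1\sint\widehat S)_t$ is exactly what the paper uses (stated there without detail). Two points, however, need repair.

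\medskip
\textbf{The inequality $v(y;\widehat S)\le v(y)$ is backwards.} From the feasibility of $\widehat Z^0_T$ in the frictionless dual you only get $v(y;\widehat S)\le \massE[V(y\widehat Z^0_T)+y\widehat Z^0_Te_T]$, which is an \emph{upper} bound for $v(y)$ as well, so no comparison follows. In fact the inclusion runs the other way: if $\massQ\in\cM_a(\widehat S)$ with density process $Z$, then $(Z,Z\widehat S)\in\cZ^\lambda_a(S)$, so $\cM_a(\widehat S)\subseteq\cM^\lambda_a$ and hence $v(y;\widehat S)\ge v(y)$. What you actually need is only the single point $y=\widehat y$: there $\widehat Z^0_T$ is both feasible for $v(\widehat y;\widehat S)$ and optimal for $v(\widehat y)$, giving $v(\widehat y;\widehat S)\le v(\widehat y)$, and the chain
\[
u(x;\widehat S)\le v(\widehat y;\widehat S)+x\widehat y\le v(\widehat y)+x\widehat y=u(x)
\]
closes. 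This is precisely the route the paper takes.

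\medskip
\textbf{Invoking Owen's theorem on $\widehat S$ is not justified.} You appeal to \cite{Owe02} both for the weak duality $u(x;\widehat S)\le\inf_y\{v(y;\widehat S)+xy\}$ and for the first-order characterization $x+\widetilde g+e_T=I(\widehat y\widehat Z^0_T)$ of the frictionless optimizer. Owen's results require $\cM_e(\widehat S)\neq\emptyset$, but here $(\widehat Z^0,\widehat Z^1)\in\cZ^\lambda_a(S)$ only, so $\widehat Z^0$ may vanish and no equivalent local martingale measure for $\widehat S$ is guaranteed. The paper is careful on this point: the weak-duality inequality is derived from Fenchel's inequality and the \emph{easy} direction of superreplication, which needs only $\cM_a(\widehat S)\neq\emptyset$ (this is made explicit in the remark following the theorem). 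For the identification of the optimizer, the paper does not use Owen's first-order condition at all; instead, having shown $u(x)=u(x;\widehat S)$, it observes that the frictional optimizer $\widehat g(x)\in\cC^\lambda_U\subseteq\cC_U(\widehat S)$ already attains this common value, and uniqueness follows from the strict concavity of $U$. You should replace your final paragraph by this direct argument.
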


 \begin{proof}
  From the definition of $\cC_U(\widehat{S})$ and $\cC(\widehat{S})$,  we know 
    $$ u(x;\widehat{S}) = \sup_{g\in\cC(\widehat{S})}\massE[U(x+g+e_T)]. $$
  Since  $\cC^{\lambda}\subseteq\cC(\widehat{S})$, then
     \begin{equation} \label{ulequ(S)}
      u(x) = \sup_{g\in\cC^{\lambda}}\massE[U(x+g+e_T)] \leq \sup_{g\in\cC(\widehat{S})}\massE[U(x+g+e_T)] = u(x;\widehat{S}).
    \end{equation}
    Moreover,
    \begin{equation} \label{D(S)subsetDlambda}
     \begin{aligned}
       \cD(\widehat{S}):=& \,\big\{Q\in ba\, \big|\, \|Q\|=1\,\mbox{ and }\, \langle Q,g \rangle\leq 0 \mbox{ for all } g\in\cC(\widehat{S})\cap L^{\infty}\big\}   \\
                   \subseteq & \,\big\{Q\in ba\, \big|\, \|Q\|=1\,\mbox{ and }\, \langle Q,g \rangle\leq 0 \mbox{ for all } g\in\cC^{\lambda}\cap L^{\infty}\big\} = \cD^{\lambda}.  \\
     \end{aligned}
    \end{equation}
  Let $\widehat{y}:=u'(x)$. Now consider the following value function
    \begin{equation*} \label{Dshadowproblem}
        v(\widehat{y};\widehat{S}) := \inf_{\massQ\in\cM_a(\widehat{S})}\massE\left[V\left(\widehat{y}\frac{d\massQ}{d\massP}\right)+\widehat{y}\frac{d\massQ}{d\massP}e_T\right].
    \end{equation*}
  By \cite[Corollary A.2.]{GLY16a}, the formulation of the function $v(\cdot)$ is equivalent to 
    \begin{equation}\label{generaldual}
        v(\widehat{y};\widehat{S}) = \inf_{Q\in\cD(\widehat{S})}\left\{\massE\left[V\left(\widehat{y}\frac{dQ^r}{d\massP}\right)\right]+\widehat{y}\langle Q,e_T\rangle\right\}. 
    \end{equation}       
  Then, we deduce from \eqref{D(S)subsetDlambda}, 
    \begin{equation}  \label{v(y,S)geqv(y)}
      \begin{aligned}
        v(\widehat{y};\widehat{S}) &\geq \inf_{Q\in\cD^{\lambda}}\left\{\massE\left[V\left(\widehat{y}\frac{dQ^r}{d\massP}\right)\right]+\widehat{y}\langle Q,e_T\rangle\right\} = v(\widehat{y}).
      \end{aligned}
    \end{equation}
  As $(\widehat{Z}^0,\widehat{Z}^1)\in\cZ^\lambda_a(S)$, we have that the measure $\widehat{\massQ}$, defined by $\frac{d\widehat{\massQ}}{d\massP}=\widehat{Z}_T^0$, 
    is an absolutely continuous martingale measure for $\widehat{S}$, i.e., $\widehat{\massQ}\in\cM_a(\widehat{S})$. 
  Hence, we deduce that $\widehat{Z}^0_T$ a fortiori is the optimizer for $v(\widehat{y};\widehat{S})$. 
  In particular,  $v(\widehat{y}) = v(\widehat{y};\widehat{S})$.  
  It follows from Theorem \ref{maintheoremR}, Fenchel's inequality and \eqref{ulequ(S)} that  
    \begin{equation}\label{wangbaching}
    u(x) = v(\widehat{y}) + x\widehat{y} = v(\widehat{y};\widehat{S})+ x\widehat{y}\geq \inf_{y>0}\left\{v\big(y;\widehat{S}\big)+xy\right\} \geq  u(x;\widehat{S})\geq  u(x), 
    \end{equation}
    therefore the primal value functions coincide. 
  In the frictionless market, we have a posteriori $u(x; \widehat{S})<U(\infty)$.
  By the uniqueness of the primal solution and $\cC^{\lambda}\subseteq\cC(\widehat{S})$, 
    the primal optimizer to \eqref{Pshadowproblem} exists, is unique and coincides with the one to the optimization problem \eqref{PPP}.
 \end{proof}
 
\begin{remark}  
  In the theorem above, it is not clear whether  equivalent martingale measures for the shadow market $\widehat{S}$ exist or not, except for the case where $(\widehat{Z}^0,\widehat{Z}^1)$ is strictly positive. 
  We stress that the following inequity in \eqref{wangbaching} still holds true under the assumption $\cM_a(\widehat{S})\neq\emptyset$: 
    $$ u(x;\widehat{S})\leq \inf_{y>0}\left\{v\big(y;\widehat{S}\big)+xy\right\}. $$
  Indeed, this follows from Fenchel's inequality and the easy part of the superreplication theorem in the frictionless setting, which could be deduced under the weaker assumption $\cM_a(\widehat{S})\neq\emptyset$. 
  Furthermore, we observe that there is no duality gap, i.e., 
   $$ u\big(x;\widehat{S}\big) = v\big(\widehat{y};\widehat{S}\big) + x\widehat{y} = \inf_{y>0}\left\{v\big(y;\widehat{S}\big)+xy\right\}, $$
   and there exist at least a primal optimizer (which may not be attained by trading strategies) and a dual one in the shadow market, which coincide with the ones in the original market with transaction costs. 
\end{remark}

\begin{remark}
 The fact that $\widehat{Z}^0_T\in\cM^{\lambda}_a$ (or $\cM^{\lambda}_e$) is the unique solution to the dual problem \eqref{DDP} does not mean the uniqueness of the couple 
     $(\widehat{Z}^1,\widehat{Z}^1)\in\cZ^{\lambda}_a(S)$ (or $\cZ^{\lambda}_e(S)$). 
 In another word, the shadow price process need not be unique. 
\end{remark}

 Conversely, the following result shows that, if a (generalized) shadow price $\widehat{S}$ exists as above and satisfies $\cM_e(\widehat{S})\neq\emptyset$, it is necessarily derived from a dual minimizer. 
   (Compare \cite[Proposition 3.8]{CS15}.)
 
 \begin{proposition}
  If a (generalized) shadow price $\widehat{S}$ exists as above and satisfies $\cM_e(\widehat{S})\neq\emptyset$, 
     then there exists a $\massP$-martingale $\widehat Z^0$, such that 
%     it is given by $\widehat{S}=\frac{\widehat{Z}^1}{\widehat{Z}^0}$ for a minimizer 
     $(\widehat{Z}^0,\widehat{Z}^0\widehat S)\in\cZ^{\lambda}_a(S)$ is a solution to the dual problem \eqref{DDP}. 
 \end{proposition}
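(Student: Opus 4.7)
The plan is to apply Owen's frictionless duality theorem \cite[Theorem 1.1]{Owe02} to the shadow market $\widehat{S}$ and then transfer the resulting dual optimizer back to the original market with transaction costs.

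First I would verify that the hypotheses of Owen's theorem hold for the market driven by $\widehat{S}$: the process $\widehat{S}$ is a semimartingale lying in $[(1-\lambda)S,S]$ and is therefore locally bounded, since $S$ is by the assumptions of Theorem \ref{maintheoremR}; the set $\cM_e(\widehat{S})$ is nonempty by hypothesis; the random endowment $e_T$ is bounded; and $u(x;\widehat{S}) = u(x) < U(\infty)$ follows from the shadow property combined with Assumption \ref{u(x)U(infty)}. Owen's theorem then yields a unique minimizer $\widehat{\massQ}\in\cM_a(\widehat{S})$ of the frictionless dual problem at $\widehat{y}:=u'(x;\widehat{S})$, and setting $\widehat{Z}^0_t:=\Econd{d\widehat{\massQ}/d\massP}{\cF_t}$ produces a true $\massP$-martingale satisfying the conjugacy
$$u(x;\widehat{S}) = \massE\!\left[V(\widehat{y}\widehat{Z}^0_T)+\widehat{y}\widehat{Z}^0_T e_T\right] + x\widehat{y}.$$

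Next I would check that the pair $(\widehat{Z}^0,\widehat{Z}^0\widehat{S})$ belongs to $\cZ^\lambda_a(S)$. The first component is a $\massP$-martingale with $\widehat{Z}^0_0=1$; the product $\widehat{Z}^0\widehat{S}$ is a local martingale by Bayes' rule, as $\widehat{\massQ}\in\cM_a(\widehat{S})$; and condition \eqref{J10} is immediate from $\widehat{S}\in[(1-\lambda)S,S]$. In particular $\widehat{Z}^0_T\in\cM^\lambda_a$, and the natural embedding $\cM_a(\widehat{S})\hookrightarrow\cM^\lambda_a$ implies $v(y)\leq v(y;\widehat{S})$ for every $y>0$.

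The closing step is to identify $\widehat{Z}^0_T$ as the dual optimizer of \eqref{DDP}. Since $u\leq u(\,\cdot\,;\widehat{S})$ globally (exactly as in \eqref{ulequ(S)}) with equality at the given $x$, both concave functions share the tangent at $x$; combining the differentiability statements of Theorem \ref{maintheoremR} and Owen's theorem therefore gives $u'(x)=u'(x;\widehat{S})=\widehat{y}$. Using the conjugate identities $u(x)=v(\widehat{y})+x\widehat{y}$ and $u(x;\widehat{S})=v(\widehat{y};\widehat{S})+x\widehat{y}$ together with $u(x)=u(x;\widehat{S})$ and the explicit formula above forces
$$\massE\!\left[V(\widehat{y}\widehat{Z}^0_T)+\widehat{y}\widehat{Z}^0_T e_T\right] = v(\widehat{y}),$$
and, since $\widehat{Z}^0_T\in\cM^\lambda_a$, the pair $(\widehat{Z}^0,\widehat{Z}^0\widehat{S})$ is the desired dual minimizer. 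The step that I expect to require the most care is precisely the passage from equality of the two value functions at the single point $x$ to equality of their derivatives there, since this is what allows the dual argument $\widehat{y}$ supplied by Owen's theorem in the shadow market to be identified with the one coming from Theorem \ref{maintheoremR} in the transaction cost market; once this alignment is established, the rest reduces to bookkeeping of the two conjugacy relations.
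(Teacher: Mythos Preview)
Your proposal is correct and follows essentially the same route as the paper: apply the frictionless duality theorem (Owen \cite{Owe02}, which is the right reference in the presence of random endowment) in the shadow market $\widehat{S}$, embed the resulting dual optimizer $\widehat{Z}^0_T$ into $\cM^\lambda_a$ via $(\widehat{Z}^0,\widehat{Z}^0\widehat{S})\in\cZ^\lambda_a(S)$, and then argue that it must also be the minimizer of the transaction-cost dual problem \eqref{DDP}.

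The only noteworthy difference lies in the step you yourself flag as delicate, namely the identification $\widehat{y}=u'(x)=u'(x;\widehat{S})$. You obtain it by a tangent argument on the primal side: $u\le u(\,\cdot\,;\widehat{S})$ globally with equality at $x$, and both are differentiable, hence the derivatives agree. The paper instead works on the dual side via the chain
\[
u(x)=v(\widehat{y}(x))+x\widehat{y}(x)\le v(\widehat{y}(x;\widehat{S}))+x\widehat{y}(x;\widehat{S})\le v(\widehat{y}(x;\widehat{S});\widehat{S})+x\widehat{y}(x;\widehat{S})=u(x;\widehat{S})=u(x),
\]
using that $\widehat{y}(x)$ minimizes $y\mapsto v(y)+xy$ and that $v\le v(\,\cdot\,;\widehat{S})$; strict convexity of $v$ then forces $\widehat{y}(x)=\widehat{y}(x;\widehat{S})$ and $v(\widehat{y}(x))=v(\widehat{y}(x);\widehat{S})$. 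Both arguments are short and valid; the paper's version avoids invoking differentiability of $u(\,\cdot\,;\widehat{S})$ directly and instead gets everything from the sandwich of dual values, while your version is perhaps the more intuitive geometric picture. Either way the bookkeeping you describe afterwards is exactly what is needed.
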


 \begin{proof}
  Choose  $\massQ\in\cM_a\big(\widehat{S}\big)$ and denote by $Z$ its density process. 
  It is obvious that $(Z^0,Z^1):=\big(Z,Z\widehat{S}\big)\in\cZ^\lambda_a(\widehat{S})$.
  Moreover, from $\cM_a\big(\widehat{S}\big)\subseteq \mathcal M^\lambda_a$ and \cite[Theorem 2.2]{Sch01}, we have
      \begin{align*}
        u(x) &= v\big(\widehat y(x)\big)+x\widehat y(x) \leq v\big(\widehat y(x;\widehat{S})\big)+x\widehat y(x;\widehat{S}) \\
             &\leq v\big(\widehat y(x;\widehat{S});\widehat{S}\big)+x\widehat y(x;\widehat{S})=u(x;\widehat{S})=u(x),  
      \end{align*}
      which implies $\widehat y(x)=\widehat y(x;\widehat{S})$ and $v\big(\widehat y(x)\big)=v\big(\widehat y(x;\widehat{S});\widehat{S}\big)$, 
       hence $\big(\widehat{Z}^0,\widehat{Z}^1\big):=\big(\widehat{Z},\widehat{Z}\widehat{S}\big)\in\cZ^\lambda_a(S)$ is the solution to the frictional dual problem \eqref{DDP}, 
      where $\widehat{Z}\in\cM_a\big(\widehat{S}\big)$ is the solution to its frictionless counterpart for the shadow price process $\widehat{S}$. 
 \end{proof}
 
 \begin{remark}
   The assumption $\cM_e(\widehat{S})\neq\emptyset$ ensures that we could apply the result of  \cite[Theorem 2.2]{Sch01} to the frictionless market with $\widehat{S}$, in particular, we could deduce the following equality
     $$ v\big(\widehat y(x;\widehat{S});\widehat{S}\big)+x\widehat y(x;\widehat{S})=u(x;\widehat{S}).$$
%   we need .  
 \end{remark}

%
% 
%\section{Application to exponential pricing}
%\textcolor{red}{I changed something.}

\vspace{3mm}

It is known from the so-called ``face-lifting theorem'' that, under transaction costs, the bounds for option prices obtained from superreplication arguments are only the trivial bounds. (See e.g., \cite{GRS08}.)
Therefore, the concepts of superreplication do not make sense economically in the presence of transaction costs. 
However, the concept of a utility indifference price makes perfect economic sense in the presence of transaction costs. (See e.g., \cite{HN89}.)\\

We denote now the value function by $u^{e_T}(x)$ instead of $u(x)$ to emphasize the dependence on $e_T$ and $u^0$ denotes the value function of utility maximization problem without random endowment.
The utility indifference price is the solution $p(x)$ of 
  $$ u^{e_T}\big(x-p(x)\big)=u^0(x). $$

Let us consider the exponential utility function 
  $$ U(x)=-\exp(-\gamma x), \quad x\in\RR, $$
  where $\gamma>0$ stands for the absolute risk aversion parameter. 
In this case, using the duality result, we could obtain a dual formulation for the utility based price. \\

% defined as in \cite{HN89}
%   $$ p(x):=\sup\big\{w\in\RR: \, u^{e_T}(x-w)\geq u^0(x)\big\}, $$
%   where  
For the exponential utility function $U(x)$, we have 
  $$ V(y)=\frac{y}{\gamma}\left(\log\left(\frac{y}{\gamma}\right)-1\right), \quad y >0.  $$
  
\begin{lemma}  \label{u(x)=infE}
  For the exponential utility function, we have that
    \begin{equation*}
      u^{e_T}(x) = \inf_{Z_T^0\in\cM^\lambda_a}U\left(\frac{1}{\gamma}\massE\left[Z_T^0\log\big(Z_T^0\big)\right]+\massE\left[Z_T^0e_T\right] +x \right),
    \end{equation*}
    for all $x\in\RR$.
\end{lemma}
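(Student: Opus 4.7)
The plan is to start from the duality identity
\[
u^{e_T}(x) \;=\; \inf_{y>0}\bigl\{v(y)+xy\bigr\}, \qquad v(y)=\inf_{Z^0_T\in\cM^\lambda_a}\massE\!\left[V(yZ^0_T)+yZ^0_T e_T\right],
\]
which is guaranteed by Theorem~\ref{maintheoremR}, and then substitute the explicit convex conjugate of the exponential utility. Since $V(y)=\tfrac{y}{\gamma}\bigl(\log(y/\gamma)-1\bigr)$ and every $Z^0_T\in\cM^\lambda_a$ satisfies $\massE[Z^0_T]=1$, a direct computation yields, for each fixed $Z^0_T$,
\[
\massE\!\left[V(yZ^0_T)\right]+y\,\massE[Z^0_T e_T]+xy \;=\; \tfrac{y}{\gamma}H(Z^0_T)+\tfrac{y}{\gamma}\log(y/\gamma)-\tfrac{y}{\gamma}+y\,E(Z^0_T)+xy,
\]
where I write $H(Z^0_T):=\massE[Z^0_T\log Z^0_T]$ and $E(Z^0_T):=\massE[Z^0_T e_T]$.

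Next, I would interchange the two infima (both $y$ and $Z^0_T$ range over disjoint variables, so this is legitimate) and solve the inner minimization over $y>0$ for each fixed $Z^0_T$. Differentiating the strictly convex expression above in $y$ and setting the derivative to zero gives
\[
\log(y/\gamma)\;=\;-H(Z^0_T)-\gamma\bigl(E(Z^0_T)+x\bigr),
\]
i.e.\ $\widehat{y}=\gamma\exp\!\bigl(-H(Z^0_T)-\gamma(E(Z^0_T)+x)\bigr)$. Plugging $\widehat y$ back into the objective, the $H$-terms cancel and after simplification the value equals $-\tfrac{\widehat y}{\gamma}=-\exp\!\bigl(-\gamma\bigl(\tfrac{1}{\gamma}H(Z^0_T)+E(Z^0_T)+x\bigr)\bigr)=U\!\bigl(\tfrac{1}{\gamma}H(Z^0_T)+E(Z^0_T)+x\bigr)$. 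Taking the infimum over $Z^0_T\in\cM^\lambda_a$ then gives the claimed formula.

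The only delicate points are: (i) handling those $Z^0_T$ with $H(Z^0_T)=+\infty$, which is harmless because such densities give the value $U(\infty)=0\geq u^{e_T}(x)$ and thus do not affect the infimum; (ii) justifying the interchange of infima, which is immediate since the joint expression is jointly measurable in $(y,Z^0_T)$ and one can simply compare a double infimum to the iterated ones in either order; (iii) ensuring that the minimizer $\widehat y$ actually attains the inner infimum, which follows from strict convexity in $y$ together with $\lim_{y\downarrow 0}f(y)=0$ and $\lim_{y\uparrow\infty}f(y)=+\infty$ under the assumption (Assumption~\ref{u(x)U(infty)}) that rules out degeneracies. I expect the computation of the optimizer $\widehat y$ and the cancellation leading to the exponential form to be the main (routine) calculation; the only conceptual issue is the interchange of infima, which is essentially trivial here.
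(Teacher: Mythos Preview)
Your proposal is correct and follows exactly the route the paper has in mind: the paper omits the proof, stating only that it follows from Theorem~\ref{maintheoremR} (the duality $u^{e_T}(x)=\inf_{y>0}\{v(y)+xy\}$) and is analogous to \cite[Proposition~11]{Bou02}, which is precisely your computation of substituting the explicit conjugate $V(y)=\tfrac{y}{\gamma}(\log(y/\gamma)-1)$, interchanging the infima, and minimizing over $y$ in closed form. Your treatment of the boundary cases (infinite entropy, interchange of infima, attainment of $\widehat y$) is adequate.
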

The proof of the above lemma follows from Theorem \ref{maintheoremR} and is similar to the one of \cite[Proposition 11]{Bou02}, so we omit it. We now introduce the utility based pricing for $e_T$.
%\begin{proof}
%  Fix $Z_T^0\in\cM_a^\lambda$. Then, 
  % \begin{equation}
    % \begin{aligned}
      % \inf_{y>0}& \left\{\massE\left[V\big(yZ_T^0\big)+yZ_T^0e_T\right]+xy\right\} = \inf_{y>0}\left\{\massE\left[\tfrac{yZ^0_T}{\gamma}\left(\log\left(\tfrac{yZ_T^0}{\gamma}\right)-1\right)+yZ_T^0e_T\right]+xy\right\} \\
         % &= \inf_{y>0}\left\{\tfrac{y}{\gamma}\left(\log\left(\tfrac{y}{\gamma}\right)-1\right) +y\left(\massE\left[\tfrac{Z^0_T}{\gamma}\log\left(Z_T^0\right) +Z_T^0e_T\right]+x\right) \right\} \\
         % &= U\left(\massE\left[\tfrac{Z^0_T}{\gamma}\log\left(Z_T^0\right) +Z_T^0e_T\right]+x\right). 
     %\end{aligned}
   %\end{equation}
  %By Theorem \ref{maintheoremR}, we obtain 
%     \begin{equation*}
%       \begin{aligned}
%          u^{e_T}(x) &= \inf_{y>0}\inf_{Z_T^0\in\cM^\lambda_a}\left\{\massE\left[V\big(yZ_T^0\big)+yZ_T^0e_T\right]+xy\right\} \\
%                     &= \inf_{Z_T^0\in\cM^\lambda_a}U\left(\massE\left[\tfrac{Z^0_T}{\gamma}\log\left(Z_T^0\right) +Z_T^0e_T\right]+x\right), 
%      \end{aligned}    
%     \end{equation*}
    %which finishes the proof. 
%\end{proof}
\begin{proposition}
  For all $x\in\RR$, the utility based price of $e_T$ equals
       \begin{equation*}
      \begin{aligned}
         p(x) &= U^{-1}\big(u^{e_T}(x)\big) - U^{-1}\big(u^0(x)\big)  \\
              &= \inf_{Z^0_T\in\cM^\lambda_a}\massE\left[\tfrac{Z^0_T}{\gamma}\log\left(Z_T^0\right) + Z_T^0e_T +x\right]+\sup_{Z^0_T\in\cM^\lambda_a}\massE\left[-\tfrac{Z^0_T}{\gamma}\log\left(Z_T^0\right)-x\right]. 
      \end{aligned}    
    \end{equation*}
\end{proposition}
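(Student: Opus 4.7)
The plan is to invert the representation in Lemma~\ref{u(x)=infE} and substitute into the defining equation $u^{e_T}(x-p(x))=u^0(x)$. The key observation is that the exponential utility $U(x)=-e^{-\gamma x}$ is continuous and strictly increasing on all of $\RR$, so infimum commutes with $U$; applied to the lemma this yields
$$
U^{-1}\big(u^{e_T}(x)\big)\;=\;\inf_{Z^0_T\in\cM^{\lambda}_a}\left\{\tfrac{1}{\gamma}\massE[Z^0_T\log(Z^0_T)]+\massE[Z^0_T e_T]+x\right\},
$$
and the analogous formula (with $e_T$ replaced by $0$) for $U^{-1}\big(u^0(x)\big)$.

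Next I would verify existence and uniqueness of $p(x)$. This is immediate from Theorem~\ref{maintheoremR}: the value function $u^{e_T}$ is strictly concave, continuously differentiable on $\RR$, and by the asymptotics $u'(-\infty)=\infty$, $u'(\infty)=0$ the derivative is strictly positive, so $u^{e_T}$ is strictly increasing and $u^{e_T}(x-p(x))=u^0(x)$ determines $p(x)$ uniquely.

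Applying $U^{-1}$ to both sides of this defining equation and substituting the two formulas above produces
$$
\inf_{Z^0_T\in\cM^{\lambda}_a}\Big\{\tfrac{1}{\gamma}\massE[Z^0_T\log(Z^0_T)]+\massE[Z^0_T e_T]\Big\}+x-p(x)\;=\;\inf_{Z^0_T\in\cM^{\lambda}_a}\Big\{\tfrac{1}{\gamma}\massE[Z^0_T\log(Z^0_T)]\Big\}+x.
$$
Solving for $p(x)$ and using the elementary identity $-\inf_Z F(Z)=\sup_Z(-F(Z))$, I would then absorb the constants $x$ and $-x$ into the respective expectations (purely cosmetic, since they are constants) to recover exactly the displayed form in the statement.

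The argument is essentially bookkeeping once Lemma~\ref{u(x)=infE} is in hand; the substantive content has been delivered by the duality theorem, Theorem~\ref{maintheoremR}, so there is no real obstacle. The only minor point to watch is the finiteness of $u^{e_T}(x)$ and $u^0(x)$, guaranteed by Assumption~\ref{u(x)U(infty)} together with Theorem~\ref{maintheoremR}(1), which ensures both infima are finite and that the rearrangement involving $\pm p(x)$ is legitimate.
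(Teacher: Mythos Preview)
Your proposal is correct and follows essentially the same route as the paper. The paper isolates the multiplicative identity $u^{e_T}(x+w)=e^{-\gamma w}u^{e_T}(x)$ and uses it both to obtain the limits $u^{e_T}(\pm\infty)$ (hence existence of $p(x)$) and to compute $p(x)$, whereas you extract the equivalent additive identity $U^{-1}\big(u^{e_T}(x+w)\big)=w+U^{-1}\big(u^{e_T}(x)\big)$ directly from Lemma~\ref{u(x)=infE}; the content is the same. One small point: your existence argument for $p(x)$ via Theorem~\ref{maintheoremR} gives strict monotonicity (uniqueness) but not surjectivity onto the range containing $u^0(x)$; however this is automatic once you have written $U^{-1}\big(u^{e_T}(y)\big)=C_{e_T}+y$, since then $p:=C_{e_T}-C_0$ solves the defining equation.
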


\begin{proof}
  By the special property of the exponential function we have that 
   \begin{equation}  \label{u(x+w)=eu(x)}
     u^{e_T}(x+w) = e^{-\gamma w}u^{e_T}(x).
   \end{equation}
  In particular, $u^{e_T}(x)=e^{-\gamma x}u^{e_T}(0)$, which follows that 
   $$ \lim_{x\to -\infty}u^{e_T}(x)=-\infty, \qquad  \lim_{x\to \infty}u^{e_T}(x)=0.  $$
  Since $u^{e_T}$ is concave, continuous and strictly increasing, there exists a solution of the equation $u^{e_T}(x-p)=u^0(x)$, denoted by $p(x)$. Again by \eqref{u(x+w)=eu(x)} we have that 
    $$ \exp\big(\gamma p(x)\big) u^{e_T}(x) = u^{e_T}\big(x-p(x)\big) = u^0(x). $$
  The assertion follows by a simple computation and Lemma \ref{u(x)=infE}. 
\end{proof}

\begin{corollary}
 Under the assumptions for Theorem \ref{maintheoremR}, the utility based pricing can be represented by the solution of dual problem on shadow markets, i.e., 
 \begin{align*}
  p(x)&= \inf_{Z^0_T\in\cM^\lambda_a}\massE\left[\tfrac{Z^0_T}{\gamma}\log\left(\tfrac{Z_T^0}{\gamma}\right) -\tfrac{Z_T^0}{\gamma}+ Z_T^0e_T\right]
               -\inf_{Z^0_T\in\cM^\lambda_a}\massE\left[\tfrac{Z^0_T}{\gamma}\log\left(\tfrac{Z_T^0}{\gamma}\right)-\tfrac{Z_T^0}{\gamma}\right]\\
      &=v\big(1;\widehat{S}(x; e_T)\big)-v\big(1;\widehat{S}(x)\big),
 \end{align*}
  where $\widehat{S}(x; e_T)$ is the generalized shadow price corresponding to the problem \eqref{PPP} with $x$ and $e_T$, while $\widehat{S}(x)$ is the one corresponding to the \eqref{PPP} with $x$ but without random endowment.
\end{corollary}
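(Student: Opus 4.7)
The plan is to start from the expression for $p(x)$ derived in the preceding proposition and reshape it in two steps: an algebraic normalization producing the first equality, and an appeal to the generalized shadow-price theorem producing the second.

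\emph{Algebraic step.} Beginning with the representation
\[
p(x) = \inf_{Z^0_T\in\cM^\lambda_a}\massE\!\left[\tfrac{Z^0_T}{\gamma}\log(Z^0_T) + Z^0_T e_T + x\right] + \sup_{Z^0_T\in\cM^\lambda_a}\massE\!\left[-\tfrac{Z^0_T}{\gamma}\log(Z^0_T) - x\right],
\]
the two deterministic terms $\pm x$ factor out of the $\inf$ and $\sup$ and cancel. Flipping the sign on the supremum rewrites it as $-\inf$, giving
\[
p(x) = \inf_{Z^0_T\in\cM^\lambda_a}\massE\!\left[\tfrac{Z^0_T}{\gamma}\log(Z^0_T) + Z^0_T e_T\right] - \inf_{Z^0_T\in\cM^\lambda_a}\massE\!\left[\tfrac{Z^0_T}{\gamma}\log(Z^0_T)\right].
\]
Since $\massE[Z^0_T]=1$ for every $Z^0_T\in\cM^\lambda_a$, inserting the affine functional $-\tfrac{Z^0_T}{\gamma}(\log\gamma+1)$ inside either integrand only adds a common constant $-\tfrac{1}{\gamma}(\log\gamma+1)$ to the two infima, which cancels in the difference. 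After this shift each integrand is exactly $V(Z^0_T)=\tfrac{Z^0_T}{\gamma}\log(Z^0_T/\gamma)-\tfrac{Z^0_T}{\gamma}$, yielding the first equality.

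\emph{Shadow-market step.} The first of the two remaining infima is precisely the transaction-cost dual value $v(1)$ from \eqref{DDP} for the problem with endowment $e_T$, and the second its analogue with no endowment. By the generalized shadow-price theorem established above (combined with the formulation \eqref{generaldual}), each of these values is reproduced by the frictionless dual on the corresponding shadow market, i.e.\ coincides with $v(1;\widehat{S}(x;e_T))$ and $v(1;\widehat{S}(x))$ respectively, which delivers the second equality.

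\emph{Main obstacle.} The delicate point is that the preceding theorem identifies $v(\widehat{y}) = v(\widehat{y};\widehat{S})$ only at the specific level $\widehat{y}=u'(x)$, whereas the corollary evaluates the dual at $y=1$. For the exponential utility this gap closes automatically: using $\massE[Z^0_T]=1$, the dual objective $\massE[V(yZ^0_T)+yZ^0_T e_T]$ decomposes as $y\,\massE[\tfrac{Z^0_T}{\gamma}\log Z^0_T+Z^0_T e_T]$ plus a deterministic function of $y$, so the minimizing $Z^0_T$ is independent of $y$. Consequently the generalized shadow price constructed from this minimizer is the same at every $y>0$, and the identity $v(y)=v(y;\widehat{S})$ extends from $y=\widehat{y}$ to $y=1$. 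This $y$-independence is the only non-routine ingredient; once it is recorded, both equalities of the corollary follow at once.
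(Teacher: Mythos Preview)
Your proof is correct; the paper states this corollary without proof, treating it as an immediate consequence of the preceding proposition and the shadow-price theorem. Your identification of the $y$-independence of the dual minimizer for the exponential utility is exactly the point that needs to be made explicit, and your verification of it---via the decomposition $\massE[V(yZ^0_T)+yZ^0_T e_T]=y\,\massE[V(Z^0_T)+Z^0_T e_T]+\tfrac{y}{\gamma}\log y$ using $\massE[Z^0_T]=1$---is clean. Once the minimizer $\widehat Z^0_T$ is known not to depend on $y$, the inclusion $\cM_a(\widehat S)\subseteq\cM^\lambda_a$ together with $\widehat Z^0_T\in\cM_a(\widehat S)$ (established in the proof of the generalized shadow-price theorem) gives $v(y)=v(y;\widehat S)$ for every $y>0$, in particular at $y=1$.
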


\begin{remark}
 The choice of the generalized shadow price will not alter the above result.
\end{remark}

\bibliography{RE_TC_SP_rev}

\bibliographystyle{plain}

\end{document}